\tikzstyle{white vertex}=[fill=white, draw=black, shape=circle, thick, inner sep=1pt, minimum size=6pt]
\tikzstyle{green vertex}=[fill=green, draw=black, shape=circle, thick, inner sep=1pt, minimum size=6pt]
\tikzstyle{cyan vertex}=[fill=cyan, draw=black, shape=rectangle, thick, inner sep=1pt, minimum size=6pt]
\tikzstyle{magenta vertex}=[fill={magenta!60}, draw=black, shape=isosceles triangle, thick, inner sep=1pt, minimum size=6pt, shape border rotate=90, isosceles triangle apex angle=60]
\tikzstyle{empty node}=[fill=white, shape=circle, inner sep=1pt]
\tikzstyle{small font vertex}=[fill=none, draw=none, shape=circle, font={{\scriptsize}}]
\tikzstyle{new style 0}=[fill=none, draw=none, shape=circle]
\tikzstyle{hypergraph edge}=[rounded corners, inner sep=0pt]
\tikzstyle{filled path}=[-, fill={blue!60}, thick, fill opacity=0.3]
\tikzstyle{thick edge}=[-, thick]
\tikzstyle{dashed path}=[-, dashed]
\tikzstyle{arrow path}=[->, dashed, thick]
\tikzstyle{hyperedge}=[-]
\tikzstyle{none filled path}=[-, draw=none, fill={green!60}, fill opacity=0.15]
\tikzstyle{arrow cyan}=[->, draw=cyan, thick]
\tikzstyle{arrow green}=[draw=green, ->, thick]
\tikzstyle{green thick edge}=[-, draw=green, thick]
\tikzstyle{cyan thick edge}=[-, draw=cyan, thick]
\tikzstyle{arrow solid}=[->]
\tikzstyle{cube face}=[-, fill={blue!60}, draw=none, opacity=0.2]
\tikzstyle{fill green}=[-, fill={green!60}, draw=black, fill opacity=0.2]
\tikzstyle{fill cyan}=[-, fill={cyan!60}, fill opacity=0.2]
\tikzstyle{fill magenta}=[-, fill={magenta!60}, fill opacity=0.2]
\tikzstyle{gray arrow}=[->, semithick, color=gray, >=angle 60]
\title{A many-sorted epistemic logic for chromatic hypergraphs}
\author{\'Eric Goubault}{LIX, CNRS, École Polytechnique, IP-Paris, Palaiseau Cedex, France \and \url{https://www.lix.polytechnique.fr/~goubault/}}{goubault@lix.polytechnique.fr}{}{}
\author{Roman Kniazev}{LIX, CNRS, École Polytechnique, IP-Paris, Palaiseau Cedex, France\\ Université Paris-Saclay, ENS Paris-Saclay, CNRS, LMF, 91190, Gif-Sur-Yvette, France \and \url{https://www.lix.polytechnique.fr/Labo/Roman.KNIAZEV/}}{roman@kameronton.com}{}{}
\author{J\'er\'emy Ledent}{Université Paris Cité, CNRS, IRIF, F-75013, Paris, France \and \url{https://www.irif.fr/~ledent/}}{jeremy.ledent@irif.fr}{https://orcid.org/0000-0001-7375-4725}{}
\authorrunning{\'E. Goubault, R. Kniazev and J. Ledent} 
\keywords{Modal logics, epistemic logics, multi-agent systems, hypergraphs} 
\begin{document}

\maketitle

\begin{abstract}
We propose a many-sorted modal logic for reasoning about knowledge in multi-agent systems.
Our logic introduces a clear distinction between participating agents and the environment.
This allows to express local properties of agents and global properties of worlds in a uniform way, as well as to talk about the presence or absence of agents in a world.
The logic subsumes the standard epistemic logic and is a conservative extension of it.
The semantics is given in chromatic hypergraphs, a generalization of chromatic simplicial complexes, which were recently used to model knowledge in distributed systems.
We show that the logic is sound and complete with respect to the intended semantics.
We also show a further connection of chromatic hypergraphs with neighborhood frames.
\end{abstract}

\section{Introduction}

Epistemic logic is a modal logic that allows formal reasoning about knowledge and belief in multi-agent systems.
At its core lies the \emph{knowledge operator} denoted $K_a \phi$, which means that ``agent~$a$ knows that the formula~$\phi$ holds'', or simply ``$a$ knows $\phi$'' for short.
Since the work of~\cite{hintikka:1962}, epistemic logic has expanded in many directions, studying various operators such as common knowledge~\cite{HalpernM90}, distributed knowledge~\cite{fagin}, as well as studying how knowledge evolves over time, as in public announcement logics~\cite{Gerbrandy1997ReasoningAI}, temporal epistemic logics~\cite{HalpernV89}, or dynamic epistemic logics~\cite{hvdetal.DEL:2007}.
A common feature of those approaches is that they rely on the classic ``possible worlds'' semantics of normal modal logics, based of Kripke structures.
Indeed, a model for multi-agent epistemic logic $\Sfive$ usually consists of a set of \emph{possible worlds}~$W$, and for each agent~$a$, an equivalence relation $\sim_a \,\subseteq W \times W$ called the \emph{indistinguishability relation} of agent~$a$.
The intended semantics of such a model is that an agent~$a$ \emph{knows} that a formula~$\phi$ is true, then $\phi$ is true in every possible world that is indistinguishable from the real world for that agent.
Formally, this means that the satisfaction relation is defined as follows, given a model~$M$ 
and a world~$w \in W$ (thought of as the ``real world''):
\begin{equation} \label{eqn:Ka-kripke}
M,w \models K_a \phi \quad \text{iff} \quad M,w' \models \phi \text{ for every world $w' \in W$ such that } w \sim_a w'
\end{equation}

A recent line of work~\cite{gandalf-journal, Ditmarsch2020KnowledgeAS, Ditmarsch21Wanted, GoubaultLR21kb4, Ditmarsch22Complete, GoubaultKLR23semisimplicial} has been developing a new notion of model for epistemic logic called \emph{simplicial models}.
This approach was closely inspired by connections with distributed computing, where simplicial complexes have been very successful in modeling various models of computation~\cite{herlihyetal:2013}.
Compared to Hintikka's possible worlds semantics, this new approach represents a shift in perspective.
Rather than focusing on the \emph{worlds} (a.k.a.\ global states, in distributed computing terms) as the primary object of study, we instead focus on the agents' \emph{points of view} about the world (a.k.a.\ local states).
A possible world can then be defined as a set of \emph{compatible} points of view, one for each agent.

\subparagraph*{Worlds and views.}

To illustrate the distinction between possible worlds, and local views about the possible worlds, we use a classic example originally from distributed computing~\cite{HerlihyR95}.
Since this paper is not concerned with distributed computing, we instead tell a story about a card game.
Assume there are three agents, $\cA = \{a, b, c\}$, and a deck of four cards, $\{1, 2, 3, 4\}$.
We deal one card to each agent, and keep the remaining card hidden, so that each agent only knows its own card.
Our goal is to model this (static) epistemic situation.

In the standard Kripke model semantics, the main step is to identify the possible worlds: each possible distribution of the cards constitutes a possible world.
There are 24 such distributions, which we can denote by $W = \{ 123, 124, 132, 134, 142, 143, 213, 214, \ldots \}$, where for example the world ``$123$'' denotes the situation where agent~$a$ (resp.\ $b$, $c$) has received card number~$1$ (resp.\ $2$, $3$).
To get a Kripke model, one must also define the indistinguishability relations for each agent.
For instance, we have $123 \sim_a 132$. Indeed, from the point of view of agent~$a$, who holds the same card number~$1$ in both of those worlds, these two worlds are indistinguishable.
One can check that $\sim_a$ defined in this way is indeed an equivalence relation, with four equivalence classes, and similarly for $\sim_b$ and $\sim_c$.

In simplicial models, however, the central notion is that of \emph{local view}.
From the point of view of agent~$a$, who sees only his own card, there are four possible situations: he can be given cards  $1$, $2$, $3$ or $4$. We call these the \emph{views} of~$a$, denoted by $V_a = \{1_a, 2_a, 3_a, 4_a\}$.
Similarly for the other two agents, we have $V_b = \{1_b, 2_b, 3_b, 4_b\}$ and $V_c = \{1_c, 2_c, 3_c, 4_c\}$.
In order to get a model, one must moreover define which of those views are \emph{compatible}.
Indeed, a possible world can now be seen as a set of views, one for each agent. But not every combination is allowed: in our example, $\{1_a, 1_b, 1_c\}$ is not a compatible set of views, because at most one agent can be given the card number~$1$.
As before, there are 24 sets of compatible views, corresponding to the 24 possible worlds: $\{1_a, 2_b, 3_c\}$, $\{1_a, 2_b, 4_c\}$, etc.

Surprisingly, shifting our focus from worlds to local views reveals an underlying geometric structure in this model.
Indeed, the structure described above, with a set of views and a $n$-ary compatibility relations between those views, is known in mathematics as an \emph{abstract simplicial complex}\footnote{Technically, we did not explicitly require the downward-closure property of a simplicial complex. We will come back to this later when we move on to hypergraph models.}.
Simplicial complexes provide a combinatorial description of topological spaces.
In the picture below, each local view is represented as a vertex, and each set of compatible views is represented as a triangle between the corresponding three vertices.
The resulting shape is that of a triangulated torus, represented on the right in 3D view, and on the left as a flattened view with some repeated vertices.
Notice that we use colors to indicate agent names, as we will do in the rest of the paper.

\begin{center}
	\scalebox{0.8}{
		\begin{tikzpicture}[scale=1.4,auto,>=stealth',cloudgray/.style={draw=black,thick,circle,fill={rgb:black,1;white,3},inner sep=1pt},cloud/.style={draw=black,thick,circle,fill=white,inner sep=1pt,minimum size=8pt}, cloudblack/.style={draw=black,thick,circle,fill=black,inner sep=1pt,font=\color{white}}]
			
			\foreach \i in {0,...,6} {
				\coordinate (a\i) at (\i,0) {};
				\coordinate (b\i) at (0.5+\i,0.85) {};
				\coordinate (c\i) at (1+\i,1.7) {};  
			};
			
			\draw[thick, draw=black, fill=blue, fill opacity=0.2]
			(a0) -- (c0) -- (c6) -- (a6) -- cycle;
			\draw[thick] (b0) -- (b6);
			\draw[thick] (b0) -- (a1);
			\draw[thick] (b6) -- (c5);
			\foreach \i [count=\j from 2] in {0,...,4} {
				\draw[thick] (c\i) -- (a\j);
			};
			\foreach \i in {1,...,5} {
				\draw[thick] (c\i) -- (a\i);
			};
			\node[cloudgray] at (a0) {$1_a$};
			\node[cloud] at (a1) {$4_b$};
			\node[cloudblack] at (a2) {$1_c$};
			\node[cloudgray] at (a3) {$4_a$};
			\node[cloud] at (a4) {$1_b$};
			\node[cloudblack] at (a5) {$4_c$};
			\node[cloudgray] at (a6) {$1_a$};
			\node[cloudblack] at (b0) {$3_c$};
			\node[cloudgray] at (b1) {$2_a$};
			\node[cloud] at (b2) {$3_b$};
			\node[cloudblack] at (b3) {$2_c$};
			\node[cloudgray] at (b4) {$3_a$};
			\node[cloud] at (b5) {$2_b$};
			\node[cloudblack] at (b6) {$3_c$};
			\node[cloud] at (c0) {$1_b$};
			\node[cloudblack] at (c1) {$4_c$};
			\node[cloudgray] at (c2) {$1_a$};
			\node[cloud] at (c3) {$4_b$};
			\node[cloudblack] at (c4) {$1_c$};
			\node[cloudgray] at (c5) {$4_a$};
			\node[cloud] at (c6) {$1_b$};
			
			\path[->] ($(c0)+(0,0.4)$) edge node {$A$} ($(c6)+(0,0.4)$)
			($(a0)-(0,0.4)$) edge node[swap] {$A$} ($(a6)-(0,0.4)$)
			($(a0)-(0.4,0)$) edge node {$B$} ($(c0)-(0.4,0)$)
			($(a6)+(0.4,0)$) edge node[swap] {$B$} ($(c6)+(0.4,0)$);
		\end{tikzpicture}
	}
	\hfill
	\includegraphics[scale=0.35]{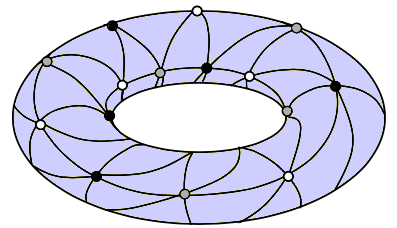}
\end{center}

This idea led to the definition of a \emph{pure simplicial model} in~\cite{gandalf-journal}.
Readers unfamiliar with simplicial complexes need not worry about technical details, as we will be using hypergraphs instead in this paper.
In the following, we assume the number of agents is $|\cA| = n+1$.

\begin{definition}[\cite{gandalf-journal}]
  \label{def:pure-simplicial-model}
  A pure simplicial model $M = (V, S, \chi, \ell)$ is given by:
  \begin{itemize}
    \item $(V, S)$ is a pure simplicial complex of dimension~$n$.
    \item $\chi : V \to \cA$ assigns agents to vertices, s.t.\ every simplex has vertices of different colors.
    \item $\ell : V \to 2^\AP$ assigns sets of atomic propositions to vertices.
  \end{itemize}
\end{definition}

With this notion of model, the analogue of condition (\ref{eqn:Ka-kripke}) becomes the following.
Note that now, the worlds $w, w'$ are facets of the simplicial model, i.e., sets of compatible vertices. Thus, the condition $a \in \chi(w \cap w')$ can be read as ``$w$ and $w'$ share an $a$-colored vertex'', that is, agent~$a$ has the same point of local view in both worlds.
\begin{equation} \label{eqn:Ka-simplicial}
	M,w \models K_a \phi \quad \text{iff} \quad M,w' \models \phi \text{ for every world $w'$ such that } a \in \chi(w \cap w')
\end{equation}

In \cref{def:pure-simplicial-model}, the requirement to be ``pure of dimension~$n$'' ensures that every agent is present in every world of the model.
While this is a standard assumption in the epistemic logic literature, it is often not the case in distributed computing.
Indeed, when we study computational models where processes may crash, one usually ends up with an impure simplicial model (see e.g.~\cite{Herlihy2001}).

\subparagraph*{Impure simplicial models.}

The idea of having a different set of agents (processes) in different possible worlds (executions) is ubiquitous in distributed computing.
This situation might occur when a process crashes during the execution of a protocol; or simply when the set of participating agents is not known in advance (say, a server concurrently answering requests from various clients).
In reference to the idea of crashed processes, and to be consistent with previous work on the topic~\cite{Ditmarsch21Wanted, GoubaultLR21kb4, Ditmarsch22Complete, GoubaultKLR23semisimplicial}, we will say that an agent can be ``alive'' or ``dead'' in a given world.
But note that for the time being, we only model static situations, so we could also say that agents can be ``present'' or ``absent''.

In the epistemic logic literature, the topic of non-participating agents has not been thoroughly studied.
It was briefly considered, e.g.\ in~\cite{fagin}, where it is called a ``nonrigid set of agents''.
This formalism is not very handy to work with, as it simply consists of extra data on top of the model indicating which agents are alive; and it is easy to circumvent the issue entirely by considering a special local state for crashed processes.
However, in simplicial models, it is quite natural and straightforward to model worlds with non-participating agents: we simply remove from \cref{def:pure-simplicial-model} the requirement that the model must be pure.
This simple idea led to a line of research on \emph{impure simplicial models}~\cite{Ditmarsch21Wanted, GoubaultLR21kb4, Ditmarsch22Complete, GoubaultKLR23semisimplicial}.

While it is clear which class of models we want to consider, we quickly run into issues when we try to define the semantics of epistemic logic formulas on these models.
The crux of the matter is that we have to decide how to define the satisfaction relation $w \models K_a \phi$, in a world $w$ where agent~$a$ is dead.
Two ways of dealing with this have been proposed.
\begin{itemize}
\item The first approach, called the three-valued semantics~\cite{Ditmarsch21Wanted, Ditmarsch22Complete}, claims that such a formula should be undefined in world~$w$.
Formulas can then be either true, false or undefined, hence the name ``three-valued''.
Defining when formulas are well-defined is not trivial, as knowledge operators can be nested and evaluating the satisfaction relation will explore the various possible worlds of the model. So one first needs to inductively define a judgment $w \bowtie \phi$, meaning that the formula~$\phi$ is well-defined in world~$w$; and then we can define the satisfaction relation $w \models \phi$ on top of it.
The resulting logic called $\Sfive^{\bowtie}$ is fully axiomatized in~\cite{Ditmarsch22Complete}.
It is a non-normal modal logic, where axiom $\mathbf{K} : K_a(\phi \rightarrow \psi) \rightarrow (K_a \phi \rightarrow K_a \psi)$, and even modus ponens, do not always hold.
However, it retains the axiom of truth, $\mathsf{T} : K_a\phi \rightarrow \phi$.
\item The second approach, called the two-valued semantics~\cite{GoubaultLR21kb4, GoubaultKLR23semisimplicial}, is obtained by closely following the correspondence with Kripke models.
As such, it yields a normal modal logic, where axiom $\mathsf{K}$ holds.
However, the axiom of truth is lost, and only a weaker version remains, saying that alive agents are truthful: $\aliveprop{a} \Rightarrow (K_a \phi \Rightarrow \phi)$.
The reason for this is that dead agents know every formula: when agent~$a$ is dead in world~$w$, the judgment $w \models K_a \phi$ is vacuously true, because there is no world~$w'$ satisfying the condition in (\ref{eqn:Ka-simplicial}).
The resulting logic is called $\KBfour$; however, small design choices in how we define the models can result in additional axioms, as  has been thoroughly investigated in~\cite{GoubaultKLR23semisimplicial}.
\end{itemize}

This situation is quite unsatisfactory, as both approaches seem to have pros and cons, and there is no obvious way to tell which one might turn out to be more useful in practice.
In this paper, we introduce a many-sorted logic that avoids entirely the problem of undefined formulas.
Rather than being a third proposal, it subsumes and unifies the previous two approaches.
But before we present the syntax of our logic, let us argue in favor of moving from simplicial complexes to hypergraphs.

\subparagraph*{From simplicial complexes to hypergraphs.}

In pure simplicial models (\cref{def:pure-simplicial-model}), the vertices represent points of view of individual agents, and the facets (a.k.a.\ maximal simplexes) represent the possible worlds.
There are also simplexes of lower dimension, but they do not seem to have a meaning. They are only here to preserve the geometric structure of the model: in order to have a triangle, we must also have the three edges of the triangle.

When we move on to impure simplicial models, we are now allowing worlds of different dimensions.
So there is no good reason why only the facets of the model should represent worlds.
In fact, as was shown in~\cite{GoubaultLR21kb4}, having only facets as worlds results in some dubious axioms.
The idea that all simplexes, not only the facets, could represent worlds was briefly discussed in~\cite{Ditmarsch2020KnowledgeAS}, and later studied in~\cite{Ditmarsch21Wanted, Ditmarsch22Complete}.
The idea was further generalized in~\cite{GoubaultKLR23semisimplicial}, where models are equipped with additional data (called a covering) allowing to explicitly say which simplexes are worlds or not.
Thus, models can be \emph{minimal} (only the facets are worlds), \emph{maximal} (all simplexes are worlds), or anything in-between.
For instance in \cref{fig:simplex-vs-hypergraph} (left), the model has only one facet (the blue triangle) but three worlds: $w_1$ is the triangle itself, where three agents are alive; $w_2$ is an edge, where only two agents are alive; and $w_3$ is a vertex, where only one agent is alive.

\begin{figure}[h]
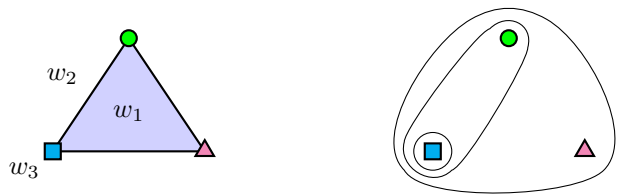

\centering
\tikzfig{smpx_vs_hyp}\hfill
\caption{A simplicial model with three worlds, and the equivalent hypergraph representation.}
\label{fig:simplex-vs-hypergraph}
\end{figure}

Still, the simplicial model depicted on the left has some simplexes that play no epistemic role (the two edges and two vertices with no label).
This introduces a strange discrepancy between simplexes that represent worlds, and simplexes that do not.
Our proposal, which is very mild from a technical standpoint, is to simply get rid of those extra simplexes.
The resulting structure is called a hypergraph.

\begin{definition} \label{def:simple-hypergraph}
A (simple) hypergraph $H$ is given by a pair $(V, E)$, where $V$ is a set of vertices, and $E \subseteq 2^V$ is a set of hyperedges (or just ``edges'', when clear from context).
\end{definition}

Note that the only difference between a hypergraph and a simplicial complex, is that the set~$E$ of hyperedges does not have to be downward-closed.
A hypergraph is depicted in \cref{fig:simplex-vs-hypergraph} (right).
It consists of three vertices $V = \{x, y, z\}$ (named from left to right), and three hyperedges $E = \{ \{x\}, \{x,y\}, \{x,y,z\} \}$.
Hyperedges are represented by a closed curve around the corresponding vertices, not unlike a Venn diagram.
By moving from simplicial complexes to hypergraphs, we seem to lose the geometric intuition behind simplicial complexes.
But it can be easily recovered by computing the downward closure of~$E$: if $(V, E)$ is a hypergraph, then $(V, \downarrow\! E)$ is a simplicial complex.

\begin{remark}
For readers familiar with the hierarchy of models introduced in~\cite{GoubaultKLR23semisimplicial}, hypergraph models fit nicely within that picture, as a strict subclass of epistemic covering models.
They are not as general as simplicial sets, as there cannot be complex connectivity between hyperedges.
They can be either minimal, or maximal, or in-between.
Simple hypergraphs as in \cref{def:simple-hypergraph} give rise to proper models, but as we will see later, we can also model non-proper behavior by allowing several hyperedges with the same sets of vertices.
\end{remark}

\subparagraph*{Many-sorted epistemic logic.}

We now describe the main contribution of the paper.
We propose a new syntax for epistemic logic formulas where agents can be dead or alive.
The central idea is to introduce several sorts of formulas\footnote{A familiar example of a many-sorted modal logic is CTL*, where the syntax is divided between state formulas and path formulas.}: \emph{world formulas} are to be interpreted in a world of the model, and \emph{agent formulas} are interpreted in a point of view of a particular agent (i.e.\ in a vertex, for hypergraph models).
We usually denote world formulas in capital letters $\Phi, \Psi$, and agent formulas in lowercase, with a subscript indicating the name of an agent $\phi_a, \psi_a$.
Thus, our logic has $|\cA|+1$ sorts, one for each agent $a \in \cA$, and an extra one for the world formulas.

A key observation is the following. Let us look again at the semantics of the knowledge operator for simplicial models, condition (\ref{eqn:Ka-simplicial}).
It says that an agent~$a$ knows~$\phi$ in a world~$w$, precisely when $\phi$ holds in every world $w'$ in which $a$ has the same point of view.
In fact, this definition does not refer to the ``real world''~$w$; it only refers to the point of view of~$a$ in this world!
This suggests that the knowledge operator $K_a$ should really be interpreted not in a world, but in a point of view of agent~$a$.
This gives us the following syntax for agent~$a$'s formulas, where $p_a$ ranges over atomic propositions concerning agent~$a$:
\[
	\varphi_a  ::=  p_a \mid 
	\neg\varphi \mid 
	\varphi \land \psi \mid
	\forallworld_a \Phi
\]
This gives us the first $|\cA|$ sorts of our logic, with one sort of agent formulas for each $a \in \cA$.
Note that agent formulas can only talk about one particular agent.
For instance, the expression $\forallworld_a \Phi \land \forallworld_b \Psi$ is not a syntactically valid formula.
Also note that $\Phi$ is a world formula. We now explain the syntax of world formulas.

Since agents can be alive or dead, we cannot talk about the knowledge of a specific agent in world formulas, or we will run into the issue of how to define the knowledge of a dead agent.
Instead, we introduce two new modal operators, $\existsagent_a$ and $\forallagent_a$, that can test whether an agent exists in this world.
As the names indicate, $\existsagent_a$ has an existential flavor, while $\forallagent_a$ is its universal counterpart.
However, note that they are not binders: $\existsagent_a$ should \textbf{not} be read as ``there exists an agent~$a$ such that''.
Rather, the intuitive meaning of those operators is the following:
\begin{itemize}
\item $\existsagent_a \phi_a$: ``there exists a point of view for agent~$a$ such that $\phi_a$ holds''.
\item $\forallagent_a \phi_a$: ``for every point of view of agent~$a$, $\phi_a$ holds''.
\end{itemize}

The syntax of world formulas is as follows, where $p_e$ denotes atomic propositions that do not talk about specific agents (`e' stands for \emph{environment} here).
\[
\Phi  ::=   p_e \mid
\neg\Phi \mid 
\Phi \land \Psi \mid
\existsagent_a \varphi_a \mid
\forallagent_a \varphi_a
\qquad \qquad \text{where } a \in \mathcal{A}
\]
Note that, unlike with agent formulas, we have $|\cA|$-many modal operators to choose from.
So, for example, the following world formula is syntactically valid: $\existsagent_a \forallworld_a \Phi \land \forallagent_b \forallworld_b \Psi$.
It is read ``there exists a point of view of agent~$a$ where $a$ knows $\Psi$, and for every point of view of agent~$b$, $b$ knows $\Psi$''.
Observe how whenever we want to talk about the knowledge of an agent, we are forced to explicitly quantify over the points of view of that agent.
This avoids entirely the question of ``undefined formulas'', where we had to make an arbitrary decision about the meaning of knowledge for dead agents.

\begin{remark}
	It might seem strange to have a modality $\forallagent_a$ quantifying over ``all points of view of agent~$a$'', when there can be at most one point of view per agent in a given world.
	First, note that when agent~$a$ is absent in a world, the operator $\forallagent_a$ is vacuously true, while $\existsagent_a$ is false.
	So these two operators do behave differently in hypergraph models.
	Secondly, one could consider an extension of hypergraph models where an agent can have multiple points of view about the world.
	We briefly explore this idea in~\cref{sec:neighborhood}, where we relate it to the neighborhood semantics of epistemic logic~\cite{pacuit2017neighborhood}.
\end{remark}

In some sense, our logic can be viewed as a \emph{refinement} of the usual knowledge operator~$K_a$ into two distinct operators:
$\forallagent_a \forallworld_a \Phi$ is the 2-valued semantics of~\cite{GoubaultLR21kb4, GoubaultKLR23semisimplicial}, which is vacuously true when agent~$a$ is dead; while $\existsagent_a \forallworld_a \Phi$ is closer to the 3-valued semantics of~\cite{Ditmarsch21Wanted, Ditmarsch22Complete}

\subparagraph*{Related work.}

As we already explained, this work is directly related to the line of work on simplicial models~\cite{gandalf-journal}, especially those that deal with impure simplicial complexes~\cite{Ditmarsch21Wanted, GoubaultLR21kb4, Ditmarsch22Complete, GoubaultKLR23semisimplicial}.
Recently, a single-sorted epistemic logic on hypergraphs was considered in \cite{DingLW23aggregative} to study weakly aggregative logics.
There, vertices of hypergraphs are not colored as they are interpreted as worlds, so these models lie in-between epistemic frames and neighborhood frames. 
A framework that uses adjoint modalities was studied in \cite{SadrzadehD09adjoint} in the context of epistemic modalities ``agent is uncertain about'' and ``agent has information that''.
In interpreted systems \cite{fagin}, epistemic frames are generated by explicitly modeling the local states of agents and global states of the environment. 
However, at the level of syntax, no difference is made between local properties of the agents, and global properties of the environment.

\subparagraph*{Plan of the paper.}

In \cref{sec:hyper-logic}, we start by describing the syntax of the logic $\twoCH$ and its semantics in chromatic hypergraphs.
We give an axiomatization of the logic in \cref{sec:prooftheory}, where we prove the completeness result in \cref{thm:completeness}.
In \cref{sec:equivalence}, we relate hypergraph models with partial epistemic models by showing an isomorphism of categories (\cref{prop:kripke-hypergraph}). 
We use this equivalence to formulate a translation from $\KBfour$-formulas into $\twoCH$-formulas, showing that the latter is a conservative extension of the former (\cref{thm:conservative-extension}).

\section{Two-level chromatic hypergraph logic $\twoCH$}
\label{sec:hyper-logic}

We introduce a many-sorted refinement of multi-agent epistemic logic $\KBfour$ studied in \cite{GoubaultLR21kb4}.
Our logic has $|\mathcal{A}|+1$ sorts of formulas: $|\mathcal{A}|$ sorts of \emph{agent} formulas, one per agent, and one sort of \emph{world} formulas.
Since this logic is intended to be interpreted on chromatic hypergraphs (see \cref{sec:semantics}), we name it the \emph{2-level Chromatic Hypergraph Logic}, $\twoCH$.
The intended interpretation of agent formulas is to describe local information that belongs to a point of view of a specific agent.
On the other hand, world formulas talk about properties of the environment, or world.
As we will see in Section \ref{sec:prooftheory}, this many-sorted logic, or two-level logic, embeds faithfully the logic $\KBfour$; but it also makes explicit (and not up to model interpretation as in \cite{GoubaultLR21kb4}) the various choices about how much agents can observe each other's presence or absence. 

\subsection{Syntax}

Fix a finite set~$\cA$ of agents.
For each $a \in \cA$, we have a set~$\AP_a$ of atomic propositions about agent~$a$.
We also have a set~$\AP_e$ of atomic propositions for the environment.
We use lowercase letters with subscripts $\phi_a, \psi_a, \ldots$ to denote agent formulas, and uppercase letters $\Phi, \Psi, \dots$ for world formulas.

\begin{definition}
The language of the logic $\twoCH$ is defined as follows.
For each agent $a \in \cA$, there is a sort of agent formulas generated by the following grammar:
\[
\varphi_a  ::=  p_a \mid 
\neg\varphi \mid 
\varphi \land \psi \mid
\existsworld_a \Phi
\qquad
\text{where } p_a \in \AP_a
\]
The sort of world formulas is generated by the following grammar:
\[
\Phi  ::=   p_e \mid
\neg\Phi \mid 
\Phi \land \Psi \mid
\existsagent_a \varphi_a
\qquad
\text{where } a \in \mathcal{A} \text{ and } p_e\in \AP_e
\]
\end{definition}

We will use standard propositional connectives like $\true$, $\lor$, $\Rightarrow$, defined as usual.
There are also dual modalities: $\forallworld_a \Phi := \neg\existsworld_a\neg \Phi$, and $\forallagent_a \phi_a := \neg\existsagent_a\neg \phi_a$.
Modalities are read as follows:
$\existsworld_a$ means ``agent $a$ considers possible that'',
$\forallworld_a$ means ``agent $a$ knows that'',
$\existsagent_a$ means ``there exists a point of view of agent $a$ such that'', and
$\forallagent_a$ means ``for all points of view of agent $a$''.
We call $\existsworld_a$ and $\existsagent_a$ \emph{existential} modalities and $\forallworld_a$ and $\forallagent_a$ \emph{universal} modalities.

\subsection{Semantics}
\label{sec:semantics}

A \emph{hypergraph} is a generalization of a graph, where instead of just edges between pairs of vertices, one has \emph{hyperedges} that can connect multiple vertices at once.
In the introduction, we defined \emph{simple} hypergraphs (\cref{def:simple-hypergraph}), to explain the proximity with simplicial complexes.
In fact, we will be slightly more general than that, and allow multiple hyperedges to have the same set of vertices; this will allow us to model non-proper behavior.
Namely, a (non-simple) hypergraph $H$ is a triple $(V, E, P)$, where $V$ is the set of vertices, $E$ is the set of hyperedges, and $P : E\to 2^V$ assigns to each hyperedge a set of vertices.

In the context of multi-agent systems, we need moreover to consider \emph{chromatic} hypergraphs, where each vertex is assigned an agent name.
This could be done by adding an extra piece of data $\chi : V \to \cA$, as in chromatic simplicial complexes~\cite{gandalf-journal}.
Instead, we tweak the definition a little bit in order to make explicit the set of vertices assigned to each individual agent.
This is similar to the definition of chromatic semi-simplicial sets in~\cite{GoubaultKLR23semisimplicial}.

\begin{definition}\label{def:chr-hypergraph}
    A chromatic hypergraph $H$ is a tuple
    $(E, \{V_a, \proj_a\}_{a\in\mathcal{A}})$, where:
    \begin{itemize}
    \item for all $a \in \mathcal{A}$, $V_a$ is the set of \emph{views} of agent $a$,
        \item $E$ is a set of hyperedges,
        \item for each agent $a\in \mathcal{A}$, $\proj_a: E\to V_a$ is a surjective partial function. Additionally, we require that for each $e\in E$, $\proj_a(e)$ is defined for at least one $a\in \mathcal{A}$.
    \end{itemize} 
\end{definition}

Indeed, defining $V=\bigcup_{a \in \mathcal{A}} V_a$ as the total set of  vertices of a chromatic hypergraph, and $P(e) = \{ \proj_a(e) \mid a \in \cA \} \subseteq 2^V$, we can view $H$ as a regular hypergraph. 
We will use the words \emph{view} and \emph{vertex} interchangeably, as well as \emph{world} and \emph{hyperedge}.
Given a world~$e \in E$, when $\proj_a(e)$ is undefined, we say that agent~$a$ is \emph{dead} in~$e$. Otherwise, $a$ is \emph{alive} in~$e$, and we call $\proj_a(e)$ the view of~$a$ in~$e$.
Moreover, when $\proj_a(e) = v$, we say that $v$ belongs to $e$, or that $e$ contains $v$, and occasionally write $v\in_a e$.
If a hyperedge consists of views $v_0,\dots v_{n-1}$ then we say that these views are \emph{compatible}.
Let us explain each condition imposed on $\proj_a$.
\begin{itemize}
	\item $\proj_a$ is surjective: every view belongs to at least one world.
	\item $\proj_a$ is partial: not all agents are required to be alive in a world. 
	\item $\proj_a$ is functional: every world contains at most one view of each agent.
	\item $\proj_a(e)$ is defined for at least one~$a \in \cA$: every world contains at least one alive agent.
\end{itemize}

\begin{remark}
In chromatic hypergraphs, views and worlds are of equal importance: they are both described explicitly in the sets~$E$ and $(V_a)_{a \in \cA}$.
This contrasts with Kripke structures, where the set of worlds is explicit, but the views are implicit in the indistinguishability relations.
Recent work on simplicial models has highlighted the importance of considering the views as a first-class notion.
Here, we take this idea one step further with our two-level syntax, where world formulas talk about properties of the world, and agent formulas talk about properties of an agent's point of view.
We also avoid choosing between \emph{local} vs.\ \emph{global} atomic propositions (in the sense of~\cite{gandalf-journal}): agent formulas can talk about (local) properties of agent $a \in \cA$ using atoms in~$\AP_a$, and world formulas can talk about (global) properties of the world using atoms in~$\AP_e$.
\end{remark}

From now on, we sometimes omit the adjective ``chromatic'' when clear from context.

\begin{example}\label{ex:hypegraphs}
    Three examples of chromatic hypergraphs are depicted in~\cref{fig:hyp-ex}.
    The three agents $a, b, c$ are represented as colored shapes $\agenta$, $\agentb$, $\agentc$, respectively.
    In all three hypergraphs, there is only one view per agent.

    On the leftmost figure, all possible combinations of views are compatible, giving 7 hyperedges.
    Intuitively, in this situation, the agents do not know whether other agents exist.
    A scenario like this is standard in distributed computing, where agents are processes, and they do not know whether other processes are concurrently running.

    The hypergraph depicted in the middle has a single hyperedge containing the three views.
    In this situation, all agents know that everyone is alive, as it is the only possible world.
    This represents a scenario where every agent has guarantees that the other two agents are running.

    The rightmost figure is a hypergraph with three hyperedges, each of which contains two views.
    It represents a situation where the points of views are pairwise compatible, but not all three of them are compatible.
    That is, there is no possible world that realizes all three of them at once.
    This could model a scenario where each agent receives a message from one of the other two agents, but they do not know who sent the message.
    Another interpretation might be in a quantum setting, as an example of \emph{contextuality}~\cite{AbramskyBKLM15}.

	\begin{figure}[htp]
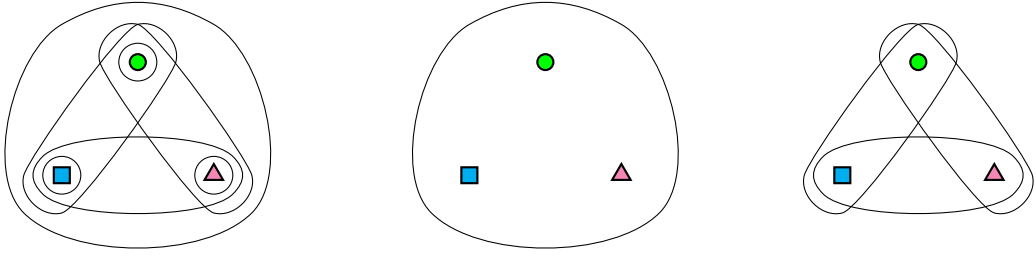

	\centering
	\tikzfig{hyp_full_simplex}\hfill
	\tikzfig{hyp_three}\hfill
	\tikzfig{hyp_ring}
	\caption{Three examples of chromatic hypergraphs}
	\label{fig:hyp-ex}
	\end{figure}
\end{example}

\begin{definition}
    A chromatic hypergraph model is a tuple $(H, \{\ell_a\}_{a\in \mathcal{A}}, \ell_e)$, where $H$ is a chromatic hypergraph, and $\ell_a: \AP_a\to P(V_a)$, $\ell_e: \AP_e \to P(E)$ are valuation functions.
\end{definition}

We can now define the semantics of $\twoCH$ formulas with respect to chromatic hypergraph models.
Given a hypergraph model $H$, the satisfaction relations are defined for every sort by mutual induction.
As expected, world formulas are interpreted in a world $e \in E$, and agent formulas are interpreted in a point of view $v \in V_a$ of that agent.

\noindent
\begin{minipage}{0.48\textwidth}
\small
\[
\begin{array}{lcl}
	H, v \models_a p_a &  \text{iff} & v \in \ell_a(p_a) \\
	H, v \models_a \neg\varphi & \text{iff} & H, v \not\models_a \varphi \\
	H, v \models_a \varphi\land\psi & \text{iff} & H, v \models_a \varphi  \text{ and } H, v \models_a \psi  \\
	H, v \models_a \existsworld_a \Phi  & \text{iff} & H, e \models_e \Phi \text{ for some } e \in E \\ & & \text{ such that } \proj_a(e)=v \\
\end{array}
\]
\end{minipage}
\hfill
\begin{minipage}{0.48\textwidth}
\small
\[
\begin{array}{lcl}
	H, e \models_e p_e & \text{iff} & e \in \ell_e(p_e) \\
	H, e \models_e \neg\Phi & \text{iff} & H, e \not\models_e \Phi \\
	H, e \models_e \Phi\land\Psi & \text{iff} & H, e \models_e \Phi  \text{ and } H, e \models_e \Psi  \\
	H, e \models_e \existsagent_a \varphi  & \text{iff} & H, v \models_a \phi \text{ for some } v \in V_a \\ & & \text{ such that } \proj_a(e)=v \\
\end{array}
\]
\end{minipage}

\subsection{Examples}

Let us illustrate the semantics of $\twoCH$ on a few examples.

\begin{example}
    Consider again the three hypergraphs from \Cref{ex:hypegraphs}, which we denote by $H_1, H_2, H_3$, from left to right.
    For now, we do not worry about atomic propositions: we just need to assume that the sets $\AP_\ast$ are non-empty, in order to get the constant $\true = p \lor \neg p$.
    Also, recall that the agents are depicted as $a = \agenta$, $b = \agentb$, and $c = \agentc$.
    
    In the model $H_1$, let us consider the hyperedge $e = \{ \agenta, \agentb \}$.
    Then we have $H_1, e \models \existsagent_a \true$ and $H_1, e \models \existsagent_b \true$, but $H_1, e \not\models \existsagent_c \true$.
    Indeed, the edge $e$ does not contain a point of view of agent~$c$, i.e., $\proj_c(e)$ is undefined.
    In fact, the world formula ``$\existsagent_a \true$'' is satisfied exactly in the worlds where agent~$a$ is alive.
	So let us write $\aliveprop{a} := \existsagent_a \true$, so that $H, e \models \aliveprop{a}$ iff $e$ contains a point of view of~$a$, that is, iff $\proj_a(e)$ is defined.
    
    We can now talk about whether agents know that other agents are alive.
    Let $v = \agenta$ be the (unique) point of view of~$a$ in the model.
    In model~$H_1$, all combinations of alive and dead agents are possible.
    So as expected, $H_1, v \models \neg \forallworld_a \aliveprop{b}$.
    In model~$H_2$ however, $a$ knows that all agents are alive, since this is the only possible world: $H_2, v \models \forallworld_a(\aliveprop{b} \land \aliveprop{c})$.
    Finally in model~$H_3$ the situation is more complicated: $a$ knows that another agent is alive, but does not know which one. Thus, $H_3, v \models \forallworld_a(\aliveprop{b} \lor \aliveprop{c}) \land \neg \forallworld_a \aliveprop{b} \land \neg \forallworld_a \aliveprop{c}$.
\end{example}

\begin{example}[$2$-agent binary input model with solo executions]
As an example where atomic propositions play a role, we consider the situation where two agents are given a binary input value, either $0$ or $1$.
Moreover, there can be solo executions (a.k.a.\ initial crash failures, in distributed computing), so that agents do not know if they are running alone or not.
As before, the agents are depicted as $a = \agenta$ and $b = \agentb$.
The sets of atomic propositions are $\AP_a = \{ 0_a, 1_a \}$, $\AP_b = \{ 0_b, 1_b \}$, and $\AP_e = \{ \mathsf{solo} \}$.
The agent atomic propositions hold in the points of view indicated on the picture below.
The environment atomic proposition $\mathsf{solo}$ holds in the four singleton hyperedges where only one agent is alive.

\begin{center}
	\tikzfig{hyp_ex_bin}
\end{center}

Let $v$ be the top-left vertex, where agent~$a$ has input value~$0$. Then by definition $H,v \models 0_a$.
Moreover, $a$ does not know whether agent~$b$ is alive: $H,v \models \neg \forallworld_a \existsagent_b \true$, which we could also reformulate as $H, v \models \neg \forallworld_a \mathsf{solo}$, that is, $a$ does not know whether this is a solo execution.
We can also say that $a$ considers possible that~$b$ is alive with value~$1$: $H,v \models \existsworld_a \existsagent_b 1_b$.
As a last example, we can express that $a$ knows that if $b$ is alive, its value is either $0$ or $1$: $H,v \models \forallworld_a (\neg \mathsf{solo} \Rightarrow \existsagent_b (0_b \lor 1_b))$.
Alternatively, we could have used the operator $\forallagent_b$ to express the same fact without a conditional: $H,v \models \forallworld_a \forallagent_b (0_b \lor 1_b)$.
\end{example}

\subsection{Safe and unsafe knowledge}
\label{sec:safe-knowledge}

In traditional epistemic logics, formulas are interpreted in a world of the model.
In the logic $\twoCH$, to talk about the knowledge of an agent in a world, we first need to quantify over the points of view of this agent.
Since there are two quantifiers $\existsagent_a$ and $\forallagent_a$, we obtain two different knowledge operators on worlds, which we call \emph{safe} and \emph{unsafe} knowledge.
\begin{mathpar}
\Ksafe_a \Phi := \existsagent_a \forallworld_a \Phi
\and
\Kunsafe_a \Phi := \forallagent_a \forallworld_a \Phi
\end{mathpar}

These two operators only differ in the knowledge of dead agents.
Indeed, given a world~$e$ and an agent~$a$ which is dead in~$e$ (i.e., $\proj_a(e)$ is undefined), we have $H,e \not\models \Ksafe_a \Phi$ (dead agents know nothing), whereas $H,e \models \Kunsafe_a \Phi$ (dead agents know everything).
However, when the agent~$a$ is alive in~$e$, the two notions agree: $H,e \models \Ksafe_a \Phi \iff H,e \models \Kunsafe_a \Phi$.

\section{Axiomatics}
\label{sec:prooftheory}


The logic $\twoCH$ has all the usual inference rules of classical propositional logic (such as modus ponens), as well as the following rules for modalities.
We annotate the $\vdash$ symbol with the sort of the corresponding formula, $a\in\cA$ for agent sorts and $e$ for the world sort.
\begin{mathpar}
    \inferrule*[Right=Nec-a]
        {\vdash_e \Phi}
        {\vdash_a\forallworld_a\Phi}
    \and
    \inferrule*[Right=Nec-e]
        {\vdash_a\varphi}
        {\vdash_e\forallagent_a\varphi}
    \and 
    \inferrule*[Right=RM]
        {\vdash_e \Phi \to \Psi}
        {\vdash_a \heartsuit\Phi \to \heartsuit\Psi}
    \and
    \inferrule*[Right=RM']
		{\vdash_a \phi \to \psi}
		{\vdash_e \heartsuit\phi \to \heartsuit\psi}
\end{mathpar}
\begin{mathpar}\mprset{fraction={===}}
    \inferrule*[Right=Adj-1]
        {\vdash_e\Phi \to \forallagent_a\psi}
        {\vdash_a\existsworld_a\Phi \to \psi}
    \and
    \inferrule*[Right=Adj-2]
        {\vdash_a\varphi \to \forallworld_a\Psi}
        {\vdash_e\existsagent_a\varphi \to \Psi}
\end{mathpar}
where $\heartsuit \in \{ \existsagent_a, \forallagent_a \}$ for rule RM, and $\heartsuit \in \{ \existsworld_a, \forallworld_a \}$ for rule RM'.
The first two rules are necessitation rules.
The next two rules are monotonicity rules.
The last two rules are called adjunction rules, and the double horizontal bar indicates that they go in both directions: top-to-bottom and bottom-to-top.
They describe the interaction between the two pairs of modalities.
Finally, we have the following axiom schemes for modalities:
\begin{itemize}
    \item $\vdash_a \varphi \to \existsworld_a\existsagent_a\varphi$ : every point of view belongs to some world;
    \item $\vdash_a \existsworld_a\existsagent_a\varphi \to \varphi$ : every world has at most one point of view of a given agent;
    \item $\vdash_e \bigvee_{a\in \mathcal{A}} \existsagent_a \true$ : every world contains at least one point of view.
\end{itemize}

As we will see, the axiom schemes for modalities correspond to the defining properties of chromatic hypergraphs.
Therefore, we will call the first axiom scheme \emph{surjecitivity}, the second one \emph{functionality}, and the third one \emph{non-emptiness}.

\begin{proposition}\label{prop:soundness}
    The logic $\twoCH$ is sound with respect to chromatic hypergraphs.
\end{proposition}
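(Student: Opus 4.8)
The plan is to prove that each inference rule preserves validity, and that each axiom scheme is valid in every chromatic hypergraph model. Here "valid" means: a world formula $\Phi$ is valid ($\vdash_e \Phi$) if $H, e \models_e \Phi$ for every model $H$ and every world $e \in E$; similarly an agent formula $\varphi$ is valid ($\vdash_a \varphi$) if $H, v \models_a \varphi$ for every model $H$ and every view $v \in V_a$. Since the axiomatization consists of the standard propositional apparatus plus the six displayed rules and the three modal axiom schemes, soundness of the propositional fragment is routine (the semantics of $\neg$ and $\land$ in both sorts is the usual Boolean one), and I would dispatch it in one sentence. The real work is the modal rules and axioms, which I would handle one at a time.

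First I would treat the two necessitation rules and the two monotonicity rules, which are straightforward consequences of the defining clauses of $\models$. For \textsc{Nec-a}, assuming $\Phi$ is valid, I unfold $H, v \models_a \forallworld_a \Phi = \neg\existsworld_a\neg\Phi$: this holds iff for every world $e$ with $\proj_a(e)=v$ we have $H,e \models_e \Phi$, which follows immediately since $\Phi$ holds in \emph{all} worlds. \textsc{Nec-e} is symmetric, using that $\forallagent_a\varphi$ quantifies over all views $v$ with $\proj_a(e)=v$. For \textsc{RM} and \textsc{RM'}, I would note that both $\existsagent_a$/$\forallagent_a$ (resp.\ $\existsworld_a$/$\forallworld_a$) are monotone in their argument, so an implication $\Phi\to\Psi$ that holds in every world lifts to $\heartsuit\Phi\to\heartsuit\Psi$ by chasing the existential/universal quantifier in the semantic clause.

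The most delicate part, and what I expect to be the main obstacle, is the two adjunction rules \textsc{Adj-1} and \textsc{Adj-2}, since these are biconditional (the double bar) and express the adjunction between $\existsworld_a$ and $\forallagent_a$ (and dually between $\existsagent_a$ and $\forallworld_a$). I would prove both directions of, say, \textsc{Adj-2} by a careful semantic argument: I need to show that $\varphi \to \forallworld_a\Psi$ is valid at every view iff $\existsagent_a\varphi \to \Psi$ is valid at every world. Unfolding the definitions, the link is the single partial function $\proj_a$: a view $v$ sees exactly the worlds $e$ with $\proj_a(e)=v$, and a world $e$ (in which $a$ is alive) sees exactly the view $v=\proj_a(e)$. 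The key semantic lemma is that $\proj_a$ being a \emph{functional} relation makes the "unit/counit" inequalities hold in both directions, so that the two-sided rule is sound. I would be careful about the case where $a$ is dead in $e$, where $\existsagent_a\varphi$ is false and the implication $\existsagent_a\varphi\to\Psi$ holds vacuously.

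Finally I would verify the three modal axiom schemes against the three defining conditions of \cref{def:chr-hypergraph}. Surjectivity of $\proj_a$ validates $\varphi \to \existsworld_a\existsagent_a\varphi$ (every view lies in some world, where it is then witnessed by $\existsagent_a$); functionality of $\proj_a$ validates $\existsworld_a\existsagent_a\varphi \to \varphi$ (a world determines at most one $a$-view, so following $\proj_a$ out and back returns the same view); and the requirement that $\proj_a(e)$ be defined for at least one $a$ validates non-emptiness $\bigvee_{a\in\mathcal{A}}\existsagent_a\true$. Each of these is a short unfolding of the semantics, and I expect them to fall out immediately once the adjunction machinery is in place.
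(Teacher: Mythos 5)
The paper states \cref{prop:soundness} without any proof (no argument appears in the body or the appendix), so there is nothing to compare against line by line; your proposal supplies the routine verification the authors omit, and it is correct. Your framing is the right one: validity is defined sort-wise, each axiom scheme is checked against the corresponding defining property of $\proj_a$, and each rule is shown to preserve validity (which matters, since necessitation and the adjunction rules preserve validity but not truth at a single world or view). One correction to your commentary, though it does not affect the proof: the two-sided rules \textsc{Adj-1} and \textsc{Adj-2} do \emph{not} depend on $\proj_a$ being functional. Both directions go through for an arbitrary relation between $E$ and $V_a$, because $\existsagent_a$ and $\forallworld_a$ (resp.\ $\existsworld_a$ and $\forallagent_a$) are the existential image and the universal preimage along the \emph{same} relation, and such a pair always forms a Galois connection --- the unfolding you describe works verbatim with no case analysis on whether $a$ is alive. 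Functionality is needed only to validate $\existsworld_a\existsagent_a\varphi\to\varphi$, surjectivity only for $\varphi\to\existsworld_a\existsagent_a\varphi$, and the at-least-one-alive-agent condition only for $\bigvee_{a\in\cA}\existsagent_a\true$, which is exactly the correspondence the paper announces just before the proposition. So the adjunction rules are not the ``main obstacle'' you anticipate; the only content specific to chromatic hypergraphs sits in the three axiom schemes, which you verify correctly.
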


\subsection{Playing with logic $\twoCH$}

In this logic, we can show that the universal modalities satisfy axiom $\mathsf{K}$:

\begin{proposition}\label{prop:K-universal}
    For $\heartsuit\in \{\forallagent_a, \forallworld_a\}$, the axiom $\mathsf{K}_\heartsuit$ holds:
    $\heartsuit(\varphi \to \psi) \to (\heartsuit\varphi \to \heartsuit \psi)$.
\end{proposition}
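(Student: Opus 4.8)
The plan is to derive $\mathsf{K}_\heartsuit$ from two facts that are already available for the universal modalities: monotonicity, and an aggregation principle $\mathsf{C}_\heartsuit : \heartsuit\varphi \land \heartsuit\psi \to \heartsuit(\varphi \land \psi)$. Once $\mathsf{C}_\heartsuit$ is in hand, $\mathsf{K}_\heartsuit$ is pure propositional bookkeeping: starting from the tautology $(\varphi\to\psi)\land\varphi \to \psi$, one applies monotonicity (rules RM/RM') to obtain $\heartsuit((\varphi\to\psi)\land\varphi) \to \heartsuit\psi$, prepends $\mathsf{C}_\heartsuit$ instantiated at $\varphi\to\psi$ and $\varphi$ to get $\heartsuit(\varphi\to\psi)\land\heartsuit\varphi \to \heartsuit\psi$, and re-curries into the desired shape $\heartsuit(\varphi\to\psi)\to(\heartsuit\varphi\to\heartsuit\psi)$. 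So all the content lies in establishing $\mathsf{C}_\heartsuit$.

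To prove $\mathsf{C}_\heartsuit$ I would exploit the adjunction rules, which exhibit each universal modality as a right adjoint: Adj-2 makes $\forallworld_a$ right adjoint to $\existsagent_a$, and Adj-1 makes $\forallagent_a$ right adjoint to $\existsworld_a$. Right adjoints preserve meets, and $\mathsf{C}_\heartsuit$ is precisely the nontrivial half of binary-meet preservation. Concretely for $\forallagent_a$: feeding the trivial entailment $\forallagent_a\varphi \to \forallagent_a\varphi$ into the top-to-bottom direction of Adj-1 yields the counit $\existsworld_a\forallagent_a\varphi \to \varphi$ in the agent sort. From the world-sort tautologies $\forallagent_a\varphi\land\forallagent_a\psi\to\forallagent_a\varphi$ and $\forallagent_a\varphi\land\forallagent_a\psi\to\forallagent_a\psi$, monotonicity of the left adjoint $\existsworld_a$ together with the two counits gives $\existsworld_a(\forallagent_a\varphi\land\forallagent_a\psi)\to\varphi$ and $\existsworld_a(\forallagent_a\varphi\land\forallagent_a\psi)\to\psi$, hence $\existsworld_a(\forallagent_a\varphi\land\forallagent_a\psi)\to\varphi\land\psi$. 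Pushing this through the bottom-to-top direction of Adj-1 (with $\Phi := \forallagent_a\varphi\land\forallagent_a\psi$) produces $\forallagent_a\varphi\land\forallagent_a\psi \to \forallagent_a(\varphi\land\psi)$, i.e.\ $\mathsf{C}_{\forallagent_a}$. The argument for $\forallworld_a$ is verbatim the same with Adj-2 in place of Adj-1 and the two sorts exchanged, using the counit $\existsagent_a\forallworld_a\Psi\to\Psi$.

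I do not expect a serious obstacle; the only thing to watch carefully is the sort discipline and the direction of each adjunction rule, since $\existsworld_a,\forallworld_a$ send world formulas to agent formulas while $\existsagent_a,\forallagent_a$ go the other way, so the two instances of $\heartsuit$ live in different sorts ($\vdash_a$ versus $\vdash_e$) and must be kept apart throughout. It is worth recording as a byproduct that the converse inclusion $\heartsuit(\varphi\land\psi)\to\heartsuit\varphi\land\heartsuit\psi$ follows from monotonicity alone, so the adjunction actually delivers full preservation of binary conjunctions by the universal modalities, a fact that is convenient to have on record for the completeness proof.
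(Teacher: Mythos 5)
Your proposal is correct and follows essentially the same route as the paper: both reduce $\mathsf{K}_\heartsuit$ to the aggregation principle $\heartsuit\varphi\land\heartsuit\psi\to\heartsuit(\varphi\land\psi)$ via the standard propositional manipulation from $((\varphi\to\psi)\land\varphi)\to\psi$ and RM, and both obtain aggregation from the adjunction rules by deriving the counit $\spadesuit\heartsuit\varphi\to\varphi$, using monotonicity of the left adjoint, and transposing back. The only difference is cosmetic bookkeeping (you apply monotonicity to each projection separately where the paper first distributes $\spadesuit$ over the conjunction), and your closing remark that the converse direction follows from monotonicity alone matches the paper's proof of the full biconditional.
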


We give a list of useful formulas that are derivable in the logic $\twoCH$.

\begin{proposition}\label{prop:useful-formulas}
    The following statements are derivable in the two-level logic $\twoCH$:
    \begin{multicols}{2}
    \begin{enumerate}
        \item $\existsagent_a\varphi \to \forallagent_a\varphi$
        \item $\forallworld_a\Phi \to \forallworld_a\existsagent_a\forallworld_a\Phi$
        \item $\existsagent_a\forallworld_a\Phi\to \Phi$
        \columnbreak
        \item $\Phi \to \forallagent_a\existsworld_a \Phi$
        \item $\varphi \to \forallworld_a\existsagent_a \varphi$
        \item $\forallworld_a\Phi\to \existsworld_a\Phi$
    \end{enumerate}
    \end{multicols}
\end{proposition}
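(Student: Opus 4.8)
The plan is to read the two adjunction rules \textsc{Adj-1} and \textsc{Adj-2} as genuine adjunctions between modalities, namely $\existsworld_a \dashv \forallagent_a$ and $\existsagent_a \dashv \forallworld_a$, and thereby obtain five of the six formulas as units, counits, or adjoint transposes, leaving only the last as a properly modal argument.

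First, items (3), (4) and (5) are unit/counit laws of these adjunctions, each gotten by transposing an identity. For (5), $\varphi \to \forallworld_a \existsagent_a \varphi$, I would start from the tautology $\vdash_e \existsagent_a \varphi \to \existsagent_a \varphi$ and apply the bottom-to-top direction of \textsc{Adj-2} with $\Psi := \existsagent_a \varphi$. Dually, for (3), $\existsagent_a \forallworld_a \Phi \to \Phi$, I would start from $\vdash_a \forallworld_a \Phi \to \forallworld_a \Phi$ and apply the top-to-bottom direction of \textsc{Adj-2} with $\varphi := \forallworld_a \Phi$ and $\Psi := \Phi$. Item (4), $\Phi \to \forallagent_a \existsworld_a \Phi$, is the unit of the other adjunction: transpose $\vdash_a \existsworld_a \Phi \to \existsworld_a \Phi$ along the bottom-to-top direction of \textsc{Adj-1} with $\psi := \existsworld_a \Phi$.

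Next, item (1), $\existsagent_a \varphi \to \forallagent_a \varphi$, is exactly the adjoint transpose of the \emph{functionality} axiom $\existsworld_a \existsagent_a \varphi \to \varphi$: instantiating \textsc{Adj-1} with $\Phi := \existsagent_a \varphi$ and $\psi := \varphi$, its bottom line is precisely functionality while its top line is precisely (1), so the bottom-to-top direction delivers the claim. Item (2), $\forallworld_a \Phi \to \forallworld_a \existsagent_a \forallworld_a \Phi$, then requires no new work: it is the substitution instance of (5) obtained by taking the agent formula $\varphi := \forallworld_a \Phi$.

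The main obstacle is item (6), $\forallworld_a \Phi \to \existsworld_a \Phi$, which is the seriality ($\mathsf{D}$) axiom and the only one not falling out formally from the adjunctions; here I would argue in two steps. First I derive \emph{seriality} $\vdash_a \existsworld_a \true$ from the \emph{surjectivity} axiom $\varphi \to \existsworld_a \existsagent_a \varphi$ (with $\varphi := \true$), followed by the monotonicity rule applied to the tautology $\existsagent_a \true \to \true$, which turns $\existsworld_a \existsagent_a \true$ into $\existsworld_a \true$, yielding $\existsworld_a \true$ by modus ponens. Second, I run the standard $\mathsf{K}+\mathsf{D}$ computation: since $\existsworld_a \Phi = \neg \forallworld_a \neg \Phi$, using axiom $\mathsf{K}$ for $\forallworld_a$ (Proposition~\ref{prop:K-universal}) together with \textsc{Nec-a} on the tautology $\Phi \to (\neg \Phi \to \bot)$ gives $\forallworld_a \Phi \to (\forallworld_a \neg \Phi \to \forallworld_a \bot)$, while seriality rewrites as $\neg \forallworld_a \bot$; propositional reasoning then yields $\forallworld_a \Phi \to \neg \forallworld_a \neg \Phi$, i.e.\ (6). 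I expect the only delicate point to be the bookkeeping of this last dualization, tracking the De Morgan rewrites between $\existsworld_a/\forallworld_a$ and $\true/\bot$, everything else being a mechanical transposition across the adjunction rules.
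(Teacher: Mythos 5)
Your derivations of items (1)--(5) are correct and coincide with the paper's proof: each is obtained by transposing either an identity or the functionality axiom across \textsc{Adj-1}/\textsc{Adj-2} (your reading of (2) as the substitution instance $\varphi := \forallworld_a\Phi$ of (5) is literally the same derivation the paper gives, since (5) itself comes from transposing $\existsagent_a\varphi\to\existsagent_a\varphi$). The only genuine divergence is item (6). The paper stays entirely inside the adjunction calculus: it composes (3) and (4) to get $\existsagent_a\forallworld_a\Phi \to \forallagent_a\existsworld_a\Phi$, transposes along \textsc{Adj-1} to obtain $\existsworld_a\existsagent_a\forallworld_a\Phi \to \existsworld_a\Phi$, and chains this with the surjectivity axiom instance $\forallworld_a\Phi \to \existsworld_a\existsagent_a\forallworld_a\Phi$ --- three lines, no appeal to $\mathsf{K}$. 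You instead first extract seriality $\vdash_a \existsworld_a\true$ from surjectivity plus monotonicity and then run the classical $\mathsf{K}+\mathsf{D}$ argument via \Cref{prop:K-universal}. Your route is sound (the necessitation, the two applications of $\mathsf{K}$, and the De Morgan bookkeeping between $\existsworld_a/\forallworld_a$ and $\true/\false$ all go through, using monotonicity in both directions for replacement of equivalents under $\forallworld_a$), and it has the side benefit of isolating $\existsworld_a\true$ as an explicitly derivable seriality principle; the paper's version is shorter and makes the point that $\mathsf{D}$ already falls out of the adjunctions together with surjectivity, without ever invoking normality of $\forallworld_a$.
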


\noindent
Here is an intuitive explanation of these formulas:
\begin{enumerate}
\item If $\phi$ holds in a point of view of~$a$, then it holds in all points of view of~$a$. That is, there can be at most one point of view per world.
\item This is a form of positive introspection: if an agent knows something, then he knows that he knows it.
However, we can put stress on the last ``he'', that is, ``if he knows something, then he knows that \emph{he} knows it''.
\item This is a form of veracity: if a fact about a world is known by someone, then it is true.
\item If a certain fact holds in a world, then the agents in this world consider this fact possible.
This is related to negative introspection.
\item Agents know the local facts about themselves.
\item This is the usual modal axiom $\mathsf{D}$. It reflects the fact that every point of view belongs to at least one world.
\end{enumerate}

In~\cite{gandalf-journal}, the use of local atomic propositions leads to a so-called assumption of locality, $K_a(p_{a,x}) \lor K_a(\neg p_{a,x})$.
In hypergraph models, valuations are local by construction:

\begin{proposition}
    For any $p_a \in \AP_a$, the formula $\forallworld_a\existsagent_a p_a \lor \forallworld_a\existsagent_a \neg p_a$ is derivable in $\twoCH$.
\end{proposition}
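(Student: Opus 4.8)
The plan is to reduce the claim to the law of excluded middle together with two instances of item~(5) of \cref{prop:useful-formulas}, namely the derivable scheme $\varphi \to \forallworld_a\existsagent_a\varphi$ (``agents know the local facts about themselves''). The point is that since $p_a$ is an agent formula expressing a purely local property of~$a$, the modal prefix $\forallworld_a\existsagent_a$ collapses: quantifying over the worlds containing a fixed view of~$a$, and then reading off the value of~$p_a$ at that view, brings us back to $p_a$ itself. Consequently the desired disjunction is morally just $p_a \lor \neg p_a$, dressed up with a redundant modal prefix on each disjunct.

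Concretely, I would proceed in three steps. First, instantiate item~(5) with $\varphi := p_a$ to get $\vdash_a p_a \to \forallworld_a\existsagent_a p_a$. Second, instantiate the same scheme with $\varphi := \neg p_a$ — a legitimate agent formula — to get $\vdash_a \neg p_a \to \forallworld_a\existsagent_a \neg p_a$. Third, take the propositional tautology $\vdash_a p_a \lor \neg p_a$, available because each sort enjoys full classical propositional reasoning, and combine it with the two implications: from $A \lor B$, $A \to C$ and $B \to D$ one derives $C \lor D$ by propositional logic, which here yields $\vdash_a \forallworld_a\existsagent_a p_a \lor \forallworld_a\existsagent_a \neg p_a$.

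I expect no real obstacle, since all the modal work is already encapsulated in item~(5); what remains is only sort bookkeeping. One should check that $\forallworld_a\existsagent_a p_a$ is a well-formed agent formula — it is, because $\existsagent_a$ turns an agent formula into a world formula and $\forallworld_a$ turns a world formula back into an agent formula — so the whole derivation lives in the sort $\vdash_a$ and the propositional combination is sort-correct. As a sanity check on validity, note that at a view~$v$ surjectivity of $\proj_a$ guarantees some world~$e$ with $\proj_a(e)=v$, and for every such~$e$ one has $H,e \models_e \existsagent_a p_a$ iff $p_a$ holds at $\proj_a(e)=v$; hence $\forallworld_a\existsagent_a p_a$ holds at~$v$ exactly when $p_a$ does, and the disjunction reduces to excluded middle on $p_a$.
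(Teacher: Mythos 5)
Your proof is correct and is essentially the paper's own argument in different clothing: the paper derives $\existsworld_a\forallagent_a p_a \to p_a$ and $p_a \to \forallworld_a\existsagent_a p_a$ by the two adjunction rules and then unfolds the dual modalities propositionally, and the first of these is exactly the contrapositive of your second instance $\neg p_a \to \forallworld_a\existsagent_a \neg p_a$ of item~(5). Replacing the contraposition-plus-duality step by excluded middle and constructive dilemma is a purely cosmetic difference, and your sort bookkeeping is right.
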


\begin{proof}
    By adjunction rules, we have $\existsworld_a\forallagent_a p_a \to p_a$ and $p_a \to \forallworld_a\existsagent_a p_a$.
    By cut rule, we have $\existsworld_a\forallagent_a p_a \to \forallworld_a\existsagent_a p_a$.
    This is equivalent to $\forallworld_a\existsagent_a p_a \lor \forallworld_a\existsagent_a \neg p_a$ by propositional logic.
\end{proof}
                    
Note that the above proof does not use the fact that $p_a$ is atomic, so in fact, an agent can decide \emph{any} formula about itself. That is, for any $\varphi$, $\forallworld_a\existsagent_a \varphi \lor \forallworld_a\existsagent_a \neg \varphi$ is derivable.

\subsection{Completeness}

The proof of completeness uses the standard canonicity argument, extended to the many-sorted case.
There is nothing surprising: the canonical model consists of the maximal consistent sets of formulas, now of several sorts.
These sets satisfy standard properties, and together form a chromatic hypergraph.

\begin{definition}
    A set of formulas $S_\ast$ of sort $\ast$ is \emph{inconsistent} if $\false$ can be derived from it.
    Otherwise, it is called consistent.
    A consistent set of formulas is \emph{maximal} if it is not a proper subset of any other consistent set of formulas.
\end{definition}

\begin{proposition}
    Maximal consistent sets (MCS) of formulas of sort $\ast$ are closed under modus ponens: for any formula $\xi$ of sort $\ast$, either $\xi\in S_\ast$ or $\neg\xi\in S_\ast$, and every (non-maximal) consistent set of formulas of sort $\ast$ is contained in a maximal consistent set of formulas of sort $\ast$.
\end{proposition}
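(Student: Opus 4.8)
The plan is to run the standard canonicity machinery sort by sort, with one structural observation that makes the many-sorted system behave exactly like the single-sorted one. The observation is that each consequence relation $\vdash_\ast$ is \emph{finitary} and admits a propositional deduction theorem: among the inference rules of $\twoCH$, only the classical propositional rules may ever be applied to hypotheses, whereas the necessitation rules, the monotonicity rules, and the two adjunction rules all transform a premise-free derivation $\vdash_\ast \chi$ into another premise-free derivation. Hence every interaction between sorts is confined to the fixed set of theorems of sort $\ast$, and deriving a formula from a set $S_\ast$ of hypotheses of sort $\ast$ is purely propositional reasoning over that background. I would record first the resulting reformulation of consistency: a set $S_\ast$ is inconsistent precisely when $\vdash_\ast \neg\bigwedge\Gamma$ for some finite $\Gamma \subseteq S_\ast$. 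This single lemma is what I expect to be the only genuinely non-routine point, and the step I would check most carefully; once it is in place, the three claims of the proposition are the textbook properties of maximal consistent sets.

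Next I would establish negation-completeness. Fix a maximal consistent $S_\ast$ and a formula $\xi$ of sort $\ast$. If $\xi \notin S_\ast$, then by maximality $S_\ast \cup \{\xi\}$ is inconsistent, so $\vdash_\ast \bigwedge\Gamma_1 \to \neg\xi$ for some finite $\Gamma_1 \subseteq S_\ast$; were $\neg\xi$ also absent, the symmetric argument would give $\vdash_\ast \bigwedge\Gamma_2 \to \xi$ for some finite $\Gamma_2 \subseteq S_\ast$, and combining over $\Gamma_1 \cup \Gamma_2$ refutes a finite conjunction of members of $S_\ast$, contradicting consistency. Thus $\xi \in S_\ast$ or $\neg\xi \in S_\ast$. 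Closure under modus ponens then follows in the same style: if $\varphi$ and $\varphi \to \psi$ lie in $S_\ast$ but $\psi \notin S_\ast$, maximality gives $\vdash_\ast \bigwedge\Gamma \to \neg\psi$ for some finite $\Gamma \subseteq S_\ast$, while $(\varphi \land (\varphi \to \psi)) \to \psi$ is a propositional theorem, so a finite conjunction of members of $S_\ast$ is again refutable, a contradiction.

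Finally I would prove the Lindenbaum extension lemma. Enumerating the (countably many) formulas of sort $\ast$ as $\xi_0, \xi_1, \dots$, I set $S^0 = S_\ast$ and let $S^{n+1} = S^n \cup \{\xi_n\}$ if this is consistent and $S^n \cup \{\neg\xi_n\}$ otherwise. Well-definedness of this step is exactly the claim that a consistent $S^n$ cannot make both $S^n \cup \{\xi_n\}$ and $S^n \cup \{\neg\xi_n\}$ inconsistent, which is again the combine-two-finite-subsets argument used above. The union $\bigcup_n S^n$ is consistent, since any refutation would use a finite subset contained in some $S^n$, and it is maximal because it is negation-complete by construction; it plainly contains $S_\ast$. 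If one prefers not to assume the atoms $\AP_\ast$ countable, Zorn's lemma applies directly: the union of a chain of consistent sets is consistent by finitariness, so a maximal consistent extension exists. In either form, the routine calculations are confined to propositional manipulations of finite conjunctions, and the per-sort deduction theorem of the first paragraph is what licenses treating each sort in isolation.
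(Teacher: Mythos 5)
Your proof is correct and is exactly the standard Lindenbaum-style argument that the paper invokes implicitly (it states this proposition without proof, remarking only that the canonical-model construction is "standard, extended to the many-sorted case"). You rightly identify the one point that actually needs checking in $\twoCH$ — that the necessitation, monotonicity, and adjunction rules are applied only to theorems, so that derivability from a set of hypotheses of sort $\ast$ reduces to finite propositional refutability over the theorems of that sort, giving the per-sort deduction theorem — and the rest (negation-completeness, closure under modus ponens, and the Lindenbaum extension via enumeration or Zorn) goes through exactly as you write it.
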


\begin{definition}
    The canonical hypergraph model consists of the following:
    \begin{itemize}
        \item the set of hyperedges is $E = \{S_e\mid S_e \text{ is a MCS of sort }e\}$;
        \item for each agent $a$, the set of vertices is $V_a = \{S_a\mid S_a \text{ is a MCS of sort }a\}$;
    \end{itemize} 
    together with relations $R_a\subseteq E\times V_a$ for every agent $a$, which are defined as follows: $S_E R_a S_a$ iff for all formulas $\Phi$, if $\Phi\in S_E$, then $\existsworld_a \Phi\in S_a$.
    The valuation function is defined by: $\val_\ast(p_\ast) = \{S_\ast\mid p_\ast\in S_\ast\}$.
\end{definition}

\begin{proposition}\label{prop:rel-can-model}
    In the canonical model, if $S_E R_a S_a$, if $\forallworld_a\Phi \in S_a$, then $\Phi\in S_E$.
\end{proposition}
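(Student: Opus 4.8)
The plan is to unfold the definition of the universal modality $\forallworld_a := \neg\existsworld_a\neg$ and then argue by contradiction, leveraging two facts: that maximal consistent sets (MCS) are closed under negation (for every formula $\xi$ of the appropriate sort, exactly one of $\xi, \neg\xi$ lies in the set), and that the canonical relation $R_a$ is defined purely in terms of the existential modality $\existsworld_a$. The key observation is that the statement is phrased with the universal modality $\forallworld_a$ sitting in the agent-sort MCS $S_a$, while the relation $R_a$ speaks about $\existsworld_a$; the proof is precisely the bridge between these two, obtained by taking the contrapositive.

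Concretely, I would first assume $S_E R_a S_a$ and $\forallworld_a\Phi \in S_a$, and rewrite the latter as $\neg\existsworld_a\neg\Phi \in S_a$. Since $S_a$ is a consistent set containing $\neg\existsworld_a\neg\Phi$, it cannot also contain $\existsworld_a\neg\Phi$ (otherwise $\false$ would be derivable); by maximality this means $\existsworld_a\neg\Phi \notin S_a$. I would then suppose, for contradiction, that $\Phi \notin S_E$. Because $S_E$ is a MCS of sort $e$, this forces $\neg\Phi \in S_E$. Now I apply the defining clause of $R_a$ to the world formula $\neg\Phi$: from $S_E R_a S_a$ and $\neg\Phi \in S_E$ we obtain $\existsworld_a\neg\Phi \in S_a$, directly contradicting $\existsworld_a\neg\Phi \notin S_a$. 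Hence $\Phi \in S_E$, as desired.

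I do not expect any real obstacle here, as the argument is entirely a matter of interlocking the definitions correctly. The only point requiring a little care is the bookkeeping on sorts and on the dual modality: one must remember that $\forallworld_a$ is not primitive but abbreviates $\neg\existsworld_a\neg$, and that the negation closure property used in step two is the one guaranteed for MCS of the \emph{agent} sort $a$, while the application of $R_a$ consumes a formula of the \emph{world} sort $e$. Once these are tracked, the chain of implications closes immediately, and no inference rule or axiom of $\twoCH$ beyond the basic MCS properties is needed — the logical content is fully encoded in the definition of $R_a$.
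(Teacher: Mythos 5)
Your proof is correct: the contrapositive argument via the definition of $R_a$ applied to $\neg\Phi$, together with negation-completeness of $S_E$ and consistency of $S_a$, is exactly the standard canonical-model argument, which the paper leaves implicit for this proposition. Nothing is missing, and your bookkeeping of sorts and of $\forallworld_a$ as the defined dual $\neg\existsworld_a\neg$ is accurate.
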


\begin{lemma}
    The canonical model is a chromatic hypergraph, that is, $R_a$ is surjective, functional, and for any $S_E$ there is $S_a$ such that $S_E R_a S_a$ for at least one $a$.
\end{lemma}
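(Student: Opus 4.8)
The plan is to verify the three defining conditions of a chromatic hypergraph (\cref{def:chr-hypergraph}) separately, viewing each relation $R_a$ as the graph of a partial function $\proj_a : E \to V_a$. Each condition matches one of the three modal axiom schemes: functionality of $R_a$ will come from the \emph{functionality} axiom, surjectivity from the \emph{surjectivity} axiom together with axiom $\mathsf D$, and the non-emptiness clause from the \emph{non-emptiness} axiom. Throughout I will use that provably equivalent formulas lie in exactly the same maximal consistent sets, the $\forallworld_a$-characterization of $R_a$ from \cref{prop:rel-can-model}, and the derivable formulas of \cref{prop:useful-formulas}.

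For \emph{functionality}, I would show that whenever $S_E R_a S_a$, the set $S_a$ is completely determined by $S_E$ via $\varphi \in S_a \iff \existsagent_a\varphi \in S_E$; uniqueness of $S_a$ then follows at once. The implication $\existsagent_a\varphi \in S_E \Rightarrow \varphi \in S_a$ uses the definition of $R_a$ at $\Phi = \existsagent_a\varphi$, followed by the functionality axiom $\existsworld_a\existsagent_a\varphi \to \varphi$. For the converse, assuming $\existsagent_a\varphi \notin S_E$ gives $\forallagent_a\neg\varphi \in S_E$, hence $\existsworld_a\forallagent_a\neg\varphi \in S_a$ by definition of $R_a$; applying the adjunction rule (\textsc{Adj-1}) to the tautology $\vdash_e \forallagent_a\neg\varphi \to \forallagent_a\neg\varphi$ yields $\vdash_a \existsworld_a\forallagent_a\neg\varphi \to \neg\varphi$, so $\neg\varphi \in S_a$, i.e.\ $\varphi \notin S_a$.

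For \emph{surjectivity}, given any view $S_a$ I would exhibit a world $S_E$ with $S_E R_a S_a$. The candidate is any maximal consistent extension of $\Gamma := \{\Psi : \forallworld_a\Psi \in S_a\}$, which exists by Lindenbaum once $\Gamma$ is consistent. Consistency is the only genuine point: if $\Gamma$ derived $\false$, then since $\forallworld_a$ is normal (\cref{prop:K-universal} and rule \textsc{Nec-a}) we would obtain $\forallworld_a\false \in S_a$; but $\forallworld_a\true \to \existsworld_a\true$ (axiom $\mathsf D$, \cref{prop:useful-formulas}) makes $\existsworld_a\true$ derivable, and $\existsworld_a\true \equiv \neg\forallworld_a\false$, contradicting consistency of $S_a$. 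It then remains to check the relation directly: if $\Phi \in S_E$ but $\existsworld_a\Phi \notin S_a$, then $\forallworld_a\neg\Phi \in S_a$, so $\neg\Phi \in \Gamma \subseteq S_E$, contradicting consistency of $S_E$.

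Finally, for the \emph{non-emptiness} clause I would use the non-emptiness axiom $\bigvee_{a}\existsagent_a\true \in S_E$, which by the disjunction property of maximal consistent sets fixes some $a$ with $\existsagent_a\true \in S_E$, i.e.\ $a$ alive in $S_E$. I then set $S_a := \{\varphi : \existsagent_a\varphi \in S_E\}$ and must show it is a maximal consistent set and that $S_E R_a S_a$. Completeness and consistency of $S_a$ both reduce to showing that, given $\existsagent_a\true \in S_E$, exactly one of $\existsagent_a\varphi$, $\existsagent_a\neg\varphi$ lies in $S_E$: ``at most one'' follows from \cref{prop:useful-formulas}(1), $\existsagent_a\varphi \to \forallagent_a\varphi$, while ``at least one'' and closure under derivation use the normality of $\forallagent_a$ (\cref{prop:K-universal}) together with the identity $\forallagent_a\false \equiv \neg\existsagent_a\true$ and the aliveness assumption. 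For the relation itself, $\Phi \in S_E$ gives $\forallagent_a\existsworld_a\Phi \in S_E$ by \cref{prop:useful-formulas}(4); since $\forallagent_a\existsworld_a\Phi \equiv \neg\existsagent_a\neg\existsworld_a\Phi$, this forces $\neg\existsworld_a\Phi \notin S_a$, hence $\existsworld_a\Phi \in S_a$, which is exactly the defining condition of $S_E R_a S_a$. I expect this last step to be the main obstacle: unlike the other two, it requires \emph{constructing} a witnessing view inside a prescribed world and proving from scratch that the natural candidate $\{\varphi : \existsagent_a\varphi \in S_E\}$ is genuinely maximal consistent, which hinges on the deterministic behaviour of $\existsagent_a$ encoded by \cref{prop:useful-formulas}(1) and on aliveness.
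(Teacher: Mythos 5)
Your proof is correct, and for two of the three clauses it is essentially the paper's argument; the interesting divergence is in the non-emptiness clause. For functionality, the paper also reduces to showing $\varphi\in S_a\Rightarrow\existsagent_a\varphi\in S_E\Rightarrow\varphi\in S'_a$, but it gets the first implication from the derivable formula $\varphi\to\forallworld_a\existsagent_a\varphi$ together with \cref{prop:rel-can-model}, whereas you get it by applying \textsc{Adj-1} to a tautology; both give the same biconditional characterization of $S_a$ in terms of $S_E$. For surjectivity the two proofs coincide (Lindenbaum on $\{\Psi:\forallworld_a\Psi\in S_a\}$, with consistency secured by normality of $\forallworld_a$ and the surjectivity axiom, which is exactly how axiom $\mathsf D$ is derived); you additionally spell out the verification that the resulting $S_E$ is $R_a$-related to $S_a$, which the paper leaves implicit. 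For non-emptiness the routes genuinely differ: the paper argues by contradiction, extracting for each agent a finite inconsistency witness from $\{\existsworld_a\Phi:\Phi\in S_E\}$, aggregating them into $\bigwedge_a\forallagent_a\false\in S_E$, and contradicting the non-emptiness axiom; you instead use the non-emptiness axiom up front to pick an alive agent $a$ and construct the witness $S_a=\{\varphi:\existsagent_a\varphi\in S_E\}$ explicitly. Your construction is more informative (it exhibits the view rather than merely proving one exists) but carries the extra burden of proving that this set is a maximal consistent set; the "at least one of $\existsagent_a\varphi,\existsagent_a\neg\varphi$" step silently uses the derivability of $\existsagent_a\true\land\forallagent_a\psi\to\existsagent_a\psi$, which does follow from the distribution of $\forallagent_a$ over conjunction established in \cref{prop:K-universal} (via $\forallagent_a\psi\land\forallagent_a\neg\psi\to\forallagent_a\false$), but deserves to be stated explicitly since it is not on the paper's list of useful formulas.
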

\begin{proof}
    First, suppose that $S_E R_a S_a$, $S_E R_a S'_a$, and $S_a\not=S'_a$.
    It means that there is a formula $\varphi$ which is in $S_a$, but is not in $S'_a$.
    By adjoint axiom and modus ponens, $\forallworld_a\existsagent_a\varphi$ is in $S_a$.
    By \Cref{prop:rel-can-model}, $\existsagent_a\varphi$ is in $S_E$.
    Using the definition of the canonical model, $\existsworld_a\existsagent_a\varphi$ belongs to~$S'_a$.
    From there, by functionality axiom and modus ponens, $\varphi\in S'_a$, which is a contradiction.
    Thus, $S_a = S'_a$.

    Second, we need to show that in the canonical model every vertex belongs to a hyperedge.
    Assume this is not the case, that is, there is a vertex $S_a$ that does not belong to any hyperedge.
    It means that there is no maximal consistent set of formulas that contains $S = \{\Phi\ |\ \forallworld_a\Phi\in S_a\}$.
    In particular, it means that $S$ is itself not consistent, that is, there is a finite set of formulas $\{\Phi_i\}$ such that $\bigwedge_i \Phi_i \to \false$ is derivable.
    By applying necessitation, we get that $\forallworld_a(\bigwedge_i \Phi_i \to \false)$ is derivable, thus belongs to $S_a$.
    As $\forallworld_a$ distributes over conjunction, and every $\forallagent_a \Phi_i$ is in $S_a$, we get that $\forallworld_a\bigwedge_i \Phi_i$ is in $S_a$.
    Applying modus ponens, we get that $\forallworld_a\false$ is in $S_a$.
    Using surjectivity axiom, we get that $\false$ is in $S_a$, that is $S_a$ is not consistent, which is a contradiction.
    
    Lastly, we need to show that every hyperedge contains some vertex.
    Suppose it is not the case, that is there is a hyperedge $S_E$ that does not contain any vertex.
    It means that for every $a$, the set $\existsworld_a S_E$ is not consistent.
    Thus, for all $a$, there is a finite set of formulas $\{\Phi^a_i\}\subset S_E$, such that $\bigwedge_i \existsworld_a\Phi^a_i \to \false$ is derivable.
    We now show that this implies that $S_E$ is not consistent.
    By applying necessitation, we get that $\forallagent_a(\bigwedge_i \existsworld_a\Phi^a_i \to \false)$ is derivable for every $a$.
    By (K) and modus ponens, we derive $\forallagent_a(\bigwedge \existsworld_a\Phi^a_i) \to \forallagent_a\false$.
    Combining them all together, we have that $\bigwedge_a\bigwedge_i(\forallagent_a\existsworld_a\Phi^a_i) \to \bigwedge_a\forallagent_a\false$ is derivable too.
    The antecedent is in $S_E$ because every $\Phi^a_i$ is in $S_E$ and $\Phi \to \forallagent_a\existsworld_a\Phi$ is an axiom.
    Thus, $\bigwedge_a\forallagent_a\false$ is in $S_E$, which means that $S_E$ is not consistent since $\bigvee_a\existsagent_a \true_a$ is an axiom, which is its negation. 
    We have a contradiction, which means that every hyperedge contains some vertex.   
\end{proof}

\begin{lemma}
    In the canonical model, $S_\ast \models_\ast \xi$ iff $\xi\in S_\ast$.
\end{lemma}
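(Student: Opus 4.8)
The plan is to prove the statement by mutual induction on the structure of $\xi$, carried out simultaneously over all sorts, since the two modalities cross sorts: $\existsworld_a$ forms an agent formula out of a world subformula, and $\existsagent_a$ forms a world formula out of an agent subformula. The base case ($\xi = p_\ast$) is immediate from the definition of the canonical valuation $\val_\ast(p_\ast) = \{S_\ast \mid p_\ast \in S_\ast\}$, and the Boolean cases ($\neg$, $\land$) follow routinely from the fact that each MCS is deductively closed, closed under modus ponens, and maximal (so $\neg\xi \in S_\ast$ iff $\xi \notin S_\ast$). All the content is therefore in the two modal cases, and for each I would split the biconditional into an \emph{easy} direction (satisfaction implies membership) and a \emph{hard} direction (membership implies satisfaction, an existence lemma). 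Throughout I rely on the preceding lemma, which guarantees that the canonical model really is a chromatic hypergraph, so that $\proj_a$ is the well-defined partial function with $\proj_a(S_E) = S_a$ iff $S_E R_a S_a$.

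For the easy directions, first observe the useful reformulation $S_E R_a S_a \iff \{\chi \mid \forallworld_a \chi \in S_a\} \subseteq S_E$, obtained by contraposing the defining clause of $R_a$ and using maximality; this is essentially the content of \Cref{prop:rel-can-model}. Now for $\xi = \existsworld_a \Phi$ at a vertex $S_a$: by the semantics and the induction hypothesis, $S_a \models_a \existsworld_a \Phi$ iff there is $S_E$ with $S_E R_a S_a$ and $\Phi \in S_E$; the easy direction then holds directly, since $\Phi \in S_E$ together with $S_E R_a S_a$ yields $\existsworld_a \Phi \in S_a$ by the very definition of $R_a$. For $\xi = \existsagent_a \varphi$ at a hyperedge $S_E$: by the semantics and the induction hypothesis this amounts to the existence of $S_a$ with $S_E R_a S_a$ and $\varphi \in S_a$; given such $S_a$, item (5) of \Cref{prop:useful-formulas} gives $\forallworld_a \existsagent_a \varphi \in S_a$, whence $\existsagent_a \varphi \in S_E$ by \Cref{prop:rel-can-model}.

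The hard directions are the crux and take the form of two existence lemmas, each proved by exhibiting a consistent set and extending it to an MCS via Lindenbaum (the maximal-extension proposition). For $\existsworld_a \Phi \in S_a$, I would show that $\{\chi \mid \forallworld_a \chi \in S_a\} \cup \{\Phi\}$ is consistent: if not, there are $\chi_1,\dots,\chi_k$ with $\forallworld_a \chi_i \in S_a$ and $\vdash_e \bigwedge_i \chi_i \to \neg\Phi$; applying \textsc{Nec-a}, the $\mathsf{K}$ axiom of \Cref{prop:K-universal}, and the distribution of $\forallworld_a$ over conjunction yields $\forallworld_a \neg\Phi \in S_a$, i.e.\ $\neg\existsworld_a \Phi \in S_a$, contradicting $\existsworld_a \Phi \in S_a$. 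Any MCS $S_E$ extending this set satisfies $S_E R_a S_a$ and $\Phi \in S_E$. For $\existsagent_a \varphi \in S_E$, I would show that $\{\existsworld_a \Phi \mid \Phi \in S_E\} \cup \{\varphi\}$ is consistent; if not, collecting the finitely many $\Phi_i \in S_E$ into $\Phi = \bigwedge_i \Phi_i \in S_E$ and using monotonicity of $\existsworld_a$ gives $\vdash_a \varphi \to \forallworld_a \neg\Phi$, so the adjunction rule \textsc{Adj-2} yields $\vdash_e \existsagent_a \varphi \to \neg\Phi$; since $\existsagent_a \varphi \in S_E$ this forces $\neg\Phi \in S_E$, contradicting $\Phi \in S_E$. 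Any MCS $S_a$ extending the set satisfies $\varphi \in S_a$ and, by the membership $\existsworld_a \Phi \in S_a$ for every $\Phi \in S_E$, also $S_E R_a S_a$.

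I expect the main obstacle to be the second existence lemma: getting the bookkeeping right so that the finite inconsistency reduces to a single representative world formula $\Phi$, and then threading the adjunction rule \textsc{Adj-2} in exactly the right direction. The first existence lemma is the familiar Kripke-style argument, whereas the second is where the genuinely many-sorted, adjoint structure of $\twoCH$ is used, and where an off-by-one in the direction of an implication or a confusion between $\existsagent_a/\forallagent_a$ would silently break the proof.
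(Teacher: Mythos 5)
Your proof is correct and is exactly the argument the paper intends: the paper states this truth lemma without proof, describing the whole completeness argument only as ``the standard canonicity argument, extended to the many-sorted case,'' and your mutual induction with the two existence lemmas (consistency of $\{\chi \mid \forallworld_a\chi\in S_a\}\cup\{\Phi\}$ and of $\{\existsworld_a\Phi\mid\Phi\in S_E\}\cup\{\varphi\}$, discharged via necessitation, distribution over conjunction, and the adjunction rule) is precisely that standard argument adapted to the two sorts. The supporting steps you invoke (\Cref{prop:rel-can-model}, item~5 of \Cref{prop:useful-formulas}, functionality of $R_a$ from the preceding lemma) are all available and used correctly.
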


\begin{theorem}
	\label{thm:completeness}
    The logic $\twoCH$ is complete with respect to chromatic hypergraph models.
\end{theorem}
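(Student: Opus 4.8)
The plan is to establish completeness by the standard canonical-model argument, which at this stage reduces to assembling the lemmas already proved. I would argue the contrapositive: assume $\xi$ is a formula of sort $\ast$ that is not derivable, $\not\vdash_\ast \xi$, and produce a chromatic hypergraph model in which $\xi$ fails at some point.

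First I would observe that $\not\vdash_\ast \xi$ forces the singleton $\{\neg\xi\}$ to be consistent: otherwise $\false$ is derivable from $\neg\xi$, and by propositional reasoning $\vdash_\ast \xi$, contradicting the assumption. By the Lindenbaum-type Proposition (every consistent set of sort $\ast$ extends to a maximal consistent one), I then extend $\{\neg\xi\}$ to a maximal consistent set $S_\ast$ of sort $\ast$. This $S_\ast$ is a point of the canonical model: a vertex when $\ast = a$ for some agent, or a hyperedge when $\ast = e$.

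Next I would invoke the two lemmas that carry the real weight. The first states that the canonical model is genuinely a chromatic hypergraph, namely the relations $R_a$ are surjective, functional, and every hyperedge meets some vertex. Together with the canonical valuation $\val_\ast(p_\ast) = \{S_\ast \mid p_\ast \in S_\ast\}$, this makes the canonical model a bona fide chromatic hypergraph model, so any failure of $\xi$ there is a failure in the intended semantics. The second is the Truth Lemma, $S_\ast \models_\ast \xi$ iff $\xi \in S_\ast$. Applying it to $S_\ast$, since $\neg\xi \in S_\ast$ we obtain $S_\ast \models_\ast \neg\xi$, that is $S_\ast \not\models_\ast \xi$. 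Hence $\xi$ is not valid in all chromatic hypergraph models, which completes the contrapositive.

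The hard part is not this wrap-up but the Truth Lemma itself, proved by induction on the structure of $\xi$ across both sorts simultaneously. The Boolean cases follow routinely from maximal consistency and closure under modus ponens; the delicate modal cases require an existence argument showing that the needed canonical witnesses actually occur, e.g.\ for $\existsworld_a\Phi \in S_a$ one must produce a hyperedge $S_E$ with $S_E R_a S_a$ and $\Phi \in S_E$, and dually for $\existsagent_a\varphi \in S_E$ one must find a vertex $S_a$ with $\varphi \in S_a$. This saturation step is exactly where the adjunction rules \textsc{Adj-1} and \textsc{Adj-2}, together with the surjectivity, functionality, and non-emptiness axioms and \Cref{prop:rel-can-model} governing $R_a$, are used. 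Once the Truth Lemma is in hand, the completeness theorem follows immediately as sketched above.
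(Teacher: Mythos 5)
Your proposal is correct and follows exactly the paper's route: the standard canonical-model construction via maximal consistent sets of each sort, the lemma that the canonical model is a chromatic hypergraph, the Truth Lemma, and the contrapositive Lindenbaum argument to refute an underivable formula. The paper itself leaves the wrap-up and the Truth Lemma's induction implicit, so your identification of the witness-existence steps (via the adjunction rules and \Cref{prop:rel-can-model}) as the real content is an accurate account of what the paper's argument relies on.
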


\section{Links to related work}
\label{sec:equivalence}

\subsection{Equivalence with partial epistemic frames}
\label{sec:equivalence-frames}

Let us recall first the definition of a partial epistemic frame, which has been one of the main models used in the study of epistemic logics such as $\KBfour$ in \cite{GoubaultLR21kb4}: 

\begin{definition}
    Given the set of agents $\mathcal{A}$,
    a partial epistemic frame $\mathcal{M}$ consists of a set of worlds $M$
    together with a family of partial equivalence relations $\{\sim_a\}_{a\in \mathcal{A}}$, such that for every $w\in M$, $w\sim_a w$ for at least one $a\in \mathcal{A}$.
    A morphism of partial epistemic frames is a function $f:M\to M'$ such that for every $a\in \mathcal{A}$ and $w,w'\in M$, $w\sim_a w'$ implies $f(w)\sim_a f(w')$.
\end{definition}

We can transform a partial epistemic frame into a chromatic hypergraph, and vice versa, using the following construction.
Suppose we are given a partial epistemic frame $\mathcal{M}$.
We construct a chromatic hypergraph $\eta(\mathcal{M})$ by setting $E=M$ and $V_a = M/_{\sim_a}$, that is the set of hyperedges is exactly the set of worlds, and the set of vertices of color $a$ is the set of equivalence classes of $\sim_a$.
We then set $\proj_a(w)$ to be $[w]_a$, that is the equivalence class of $w$ under $\sim_a$.
It is easy to check that this indeed defines a chromatic hypergraph.

Conversely, given a chromatic hypergraph $H$, we can construct a partial epistemic frame $\kappa(H)$.
We set the set of worlds $M$ to be equal to the set of hyperedges of $H$, and $e\sim_a e'$ if and only if $e$ and $e'$ share an $a$-colored vertex.
This yields a partial equivalence relation.

These maps can be seen as the dual hypergraph construction: if $H = (V, E)$ is a (non-chromatic) hypergraph, then $H^\ast$ is the hypergraph $(E, V)$ where the hyperedges are the vertices of $H$ and the vertices are the hyperedges of $H$.
Partial epistemic frames can be seen as hypergraphs that have colored hyperedges which are defined by equivalence classes.
The correspondence is exemplified in Figure~\ref{fig:frames-hypergraphs}.

\begin{figure}[h]
	\centering
        \tikzfig{hyp_full_simplex}
        \qquad
        \begin{tikzcd}
            {} & {} & {} \\
	        \arrow["\eta", shift left=4, from=1-3, to=1-1]
            \arrow["\kappa", shift left=4, from=1-1, to=1-3]
        \end{tikzcd}
        \qquad
        \tikzfig{colored_hyp_simplex}
    \caption{Example of correspondence between chromatic hypergraphs and frames.}
    \label{fig:frames-hypergraphs}
\end{figure}
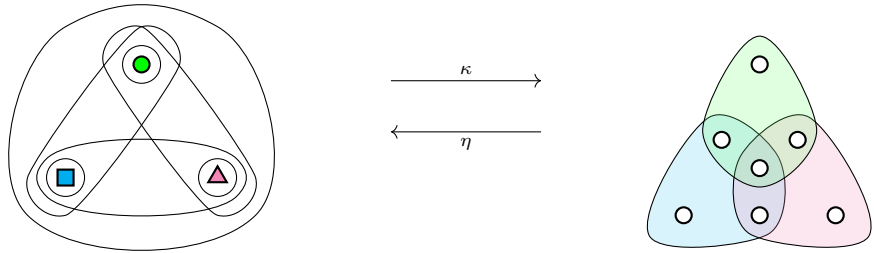

In fact, $\eta$ and $\kappa$ can be extended to morphisms of partial epistemic frames and chromatic hypergraphs, giving an equivalence of categories.
First, we need to define the corresponding morphisms of chromatic hypergraphs:

\begin{definition}
	A morphism of hypergraphs $f: H\to H'$ is a family of functions $f_a: V_a\to V'_a$ for each agent $a$, together with a function $f_e: E\to E'$ such that for all $a\in \mathcal{A}$, if $\proj_a(e)=v$ then $\proj'_a(f_e(e)) = f_a(v)$.
\end{definition}



\begin{theorem}\label{prop:kripke-hypergraph}
    The category of partial epistemic frames is isomorphic to the category of chromatic hypergraphs.
    In particular, for any chromatic hypergraph $H$, $\eta(\kappa(H))$ is isomorphic to $H$, and for any partial epistemic frame $\mathcal{M}$, $\kappa(\eta(\mathcal{M}))$ is isomorphic to $\mathcal{M}$.
\end{theorem}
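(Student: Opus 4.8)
The plan is to establish the isomorphism of categories by showing that $\eta$ and $\kappa$ are mutually inverse functors. Since the statement already spells out the object-level constructions, the work splits into three parts: (i) verify that $\eta$ and $\kappa$ are well-defined on morphisms and are functorial; (ii) show that the object-level round-trips $\eta(\kappa(H)) \cong H$ and $\kappa(\eta(\mathcal{M})) \cong \mathcal{M}$ hold naturally; (iii) conclude that these natural isomorphisms are in fact identities (or canonical isomorphisms) compatible with morphisms, yielding an isomorphism rather than merely an equivalence of categories.

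First I would extend $\eta$ and $\kappa$ to morphisms. Given a morphism $f : \mathcal{M} \to \mathcal{M}'$ of partial epistemic frames, I set $\eta(f)_e = f$ on hyperedges (recall $E = M$) and define $\eta(f)_a : M/{\sim_a} \to M'/{\sim_a}$ by $[w]_a \mapsto [f(w)]_a$; this is well-defined precisely because $w \sim_a w'$ implies $f(w) \sim_a f(w')$, which is the defining condition of a frame morphism. The compatibility condition $\proj'_a(\eta(f)_e(w)) = \eta(f)_a(\proj_a(w))$ reduces to $[f(w)]_a = [f(w)]_a$, so $\eta(f)$ is a hypergraph morphism. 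Conversely, given a hypergraph morphism $g : H \to H'$, I define $\kappa(g) = g_e : E \to E'$ on worlds; I must check that $e \sim_a e'$ (sharing an $a$-colored vertex) implies $g_e(e) \sim_a g_e(e')$, which follows because if $\proj_a(e) = \proj_a(e') = v$ then $\proj'_a(g_e(e)) = g_a(v) = \proj'_a(g_e(e'))$. Functoriality (preservation of identities and composition) is then routine on each component.

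Next I would verify the two round-trips. For $\eta(\kappa(H))$, the hyperedges are again $E$, and the vertices of color $a$ are the equivalence classes $E/{\sim_a}$ where $e \sim_a e'$ iff they share an $a$-vertex. The natural map sends a vertex $v \in V_a$ to its class $[e]_a$ for any $e$ with $\proj_a(e) = v$; surjectivity of $\proj_a$ guarantees this is defined and surjective, and functionality together with the definition of $\sim_a$ guarantees it is injective. For $\kappa(\eta(\mathcal{M}))$, the worlds are $M$ and the new relation $e \sim_a e'$ holds iff $w, w'$ land in the same class $[w]_a = [w']_a$, i.e.\ iff $w \sim_a w'$; so the relation is recovered exactly, giving an identity-on-worlds isomorphism.

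\textbf{The main obstacle} I expect is the bookkeeping around the vertex-recovery isomorphism $\eta(\kappa(H)) \cong H$: one must confirm it is a genuine iso of chromatic hypergraphs (respecting all $\proj_a$ simultaneously) and that it is \emph{natural} in $H$, so that the pair $(\eta,\kappa)$ forms an isomorphism of categories and not merely an equivalence. Naturality amounts to checking that for a hypergraph morphism $g$, the square relating $g$ and $\eta(\kappa(g))$ commutes on each color component, which again unwinds to the compatibility identity $\proj'_a \circ g_e = g_a \circ \proj_a$. Once naturality is in hand, the fact that both composites are naturally isomorphic to the identity functors, combined with $\eta,\kappa$ being inverse bijections on objects up to these canonical isomorphisms, delivers the claimed isomorphism of categories.
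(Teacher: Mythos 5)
Your proposal is correct and follows essentially the same route as the paper's proof: extend $\eta$ and $\kappa$ to morphisms via the quotient map on equivalence classes and the identity on worlds, check the compatibility condition, and verify the two round-trips. You actually spell out the round-trip isomorphisms (using surjectivity and functionality of $\proj_a$ for the vertex-recovery bijection) in more detail than the paper, which dismisses them as straightforward.
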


In light of \Cref{prop:kripke-hypergraph}, chromatic hypergraphs and partial epistemic frames contain exactly the same information.
So, in theory, we could have defined the semantics of $\twoCH$ in partial epistemic frames.
However, this would be quite unnatural to do, since epistemic frames do not have a tangible notion of point of view: we would have to attach atomic propositions, and interpret agent formulas, in the equivalence classes of~$\sim_a$.
%

Instead, we can still embed partial epistemic models (with only world atomic propositions) into a subclass of chromatic hypergraphs models, such that $\AP_a = \varnothing$ for every agent.
From this, we can extend \Cref{prop:kripke-hypergraph} to work at the level of models:


\begin{corollary}\label{prop:kripke-hypergraph-2}
    The category of partial epistemic models is isomorphic to the category of chromatic hypergraph models with empty sets of atomic propositions for agents.
\end{corollary}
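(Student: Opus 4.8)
The plan is to bootstrap from the frame-level correspondence already established in \Cref{prop:kripke-hypergraph}, adding only the valuation data on top. First I would fix the two model categories. A \emph{partial epistemic model} is a partial epistemic frame $\mathcal{M} = (M, \{\sim_a\}_{a\in\cA})$ together with a world valuation $\val : \AP_e \to P(M)$, whose morphisms are frame morphisms $f : M \to M'$ that are moreover compatible with the valuations. On the other side, a chromatic hypergraph model with $\AP_a = \varnothing$ for every $a$ is a chromatic hypergraph equipped with $\ell_e : \AP_e \to P(E)$ only: each $\ell_a$ is the unique (empty) function $\varnothing \to P(V_a)$ and so carries no information. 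Its morphisms are hypergraph morphisms satisfying the analogous compatibility with $\ell_e$ (the condition on the $\ell_a$ being vacuous).

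The key observation is that, under the functors of \Cref{prop:kripke-hypergraph}, $\eta$ sets $E = M$ and $\kappa$ sets $M = E$, so on objects the two world valuations are \emph{literally the same datum}: a function $\AP_e \to P(M)$ is the same thing as a function $\AP_e \to P(E)$. I would therefore extend $\eta$ to models by leaving the valuation untouched, declaring $\ell_e := \val$, and symmetrically extend $\kappa$ by $\val := \ell_e$; since $\AP_a = \varnothing$ there is never any view-level valuation to produce. This makes $\eta$ and $\kappa$ well defined on model objects, and the round-trips $\eta\kappa$ and $\kappa\eta$ remain (naturally isomorphic to) the identity, because they already are at the level of the underlying frames and hypergraphs and the valuation component is carried along by the identity.

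For morphisms, the proof of \Cref{prop:kripke-hypergraph} identifies hypergraph morphisms $f = (\{f_a\}_{a\in\cA}, f_e)$ with frame morphisms $f : M \to M'$, where $f_e = f$ on the world component (recall $E = M$). Hence the valuation-compatibility requirement on one side transports verbatim to the other, since it only constrains the world component: ``$w \in \val(p_e) \iff f(w) \in \val'(p_e)$'' becomes ``$e \in \ell_e(p_e) \iff f_e(e) \in \ell'_e(p_e)$'', and the two notions of model morphism coincide. Preservation of identities and composition is inherited directly from \Cref{prop:kripke-hypergraph}, as the valuations play no role there. I expect the only real obstacle to be bookkeeping: pinning down the precise definition of a model morphism so that the valuation-compatibility condition matches exactly on both sides (in particular whether it is required in one direction or as a biconditional), after which the corollary reduces to transporting the already-established frame isomorphism along the identity on valuations.
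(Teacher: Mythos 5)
Your proposal is correct and matches the paper's intended argument: the paper offers no separate proof of this corollary, treating it as an immediate lift of \Cref{prop:kripke-hypergraph} obtained by transporting the world valuation along the identification $E = M$, with the agent valuations vacuous since $\AP_a = \varnothing$. Your write-up simply makes explicit the bookkeeping (valuation-compatibility of morphisms transported verbatim) that the paper leaves implicit.
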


\subsection{Translation from $\KBfour$ to $\twoCH$}

We can use the equivalence of epistemic frames and hypergraphs for showing how the logics $\KBfour$ and $\twoCH$ are related: we will show that $\twoCH$ is a conservative extension of $\KBfour + \mathsf{NE}$, where axiom $\mathsf{NE}$ ensures that there is an alive agent in each world (see \cite{GoubaultLR21kb4} for further details).
From semantics side, the worlds of epistemic frames are the hyperedges of hypergraphs, thus the formulas of $\KBfour$ are to be translated to the \emph{world} formulas of $\twoCH$.
In particular, when defining the translation, we set the set of world atomic propositions to be the set of atomic propositions of $\KBfour$.
The translation of formulas is defined recursively as follows:
\begin{mathpar}
\translate{p} \;:=\; p \and
\translate{\neg \Phi} \;:=\; \neg \translate{\Phi} \and
\translate{\Phi \land \Psi} \;:=\; \translate{\Phi} \land \translate{\Psi} \and
\translate{K_a \Phi} \;:=\; \forallagent_a\forallworld_a \translate{\Phi}
\end{mathpar}
Essentially, this translation interprets the knowledge operator of $\KBfour$ using the \emph{unsafe knowledge} operator described in \cref{sec:safe-knowledge}.
So, if in a given world agent~$a$ is dead, $\translate{K_a \Phi}$ will be vacuously true.

\begin{proposition}\label{prop:translation-validity}
    For a partial epistemic frame $\mathcal{M}$ and a formula $\Phi$ of $\KBfour$, $\mathcal{M}, w\models \Phi$ iff $\eta(\mathcal{M}), \eta(w)\models_e \translate{\Phi}$.
\end{proposition}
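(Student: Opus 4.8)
The plan is to prove the equivalence by structural induction on the $\KBfour$-formula $\Phi$, exploiting the fact that under $\eta$ the set of worlds and the set of hyperedges coincide ($E = M$), so that $\eta(w)$ is literally $w$ viewed as a hyperedge, and the world valuation $\ell_e$ of $\eta(\mathcal{M})$ is by construction the valuation of $\mathcal{M}$ (recall that the translation declares $\AP_e$ to be the set of atomic propositions of $\KBfour$, while $\AP_a = \varnothing$). Before starting the induction I would record the single combinatorial fact that drives the whole argument: in $\eta(\mathcal{M})$ we have $\proj_a(w) = [w]_a$, so $\proj_a(w)$ is defined exactly when $w \sim_a w$ (agent~$a$ is alive at $w$), and for two worlds $\proj_a(w) = \proj_a(w')$ holds iff $w \sim_a w'$.

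For the base case $\Phi = p$, both sides reduce to $w \in \ell_e(p)$ by definition of the valuations, since $\translate{p} = p$. The Boolean cases $\neg\Phi$ and $\Phi \land \Psi$ are immediate from the induction hypothesis, since $\translate{\cdot}$ commutes with $\neg$ and $\land$, and the $\models_e$ clauses for negation and conjunction mirror the frame semantics.

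The heart of the argument is the case $\Phi = K_a \Psi$, where $\translate{K_a \Psi} = \forallagent_a \forallworld_a \translate{\Psi}$. Here I would unfold the two universal modalities through their definitions as duals of $\existsagent_a$ and $\existsworld_a$. Unfolding $\forallagent_a$ first: $\eta(\mathcal{M}), w \models_e \forallagent_a \forallworld_a \translate{\Psi}$ holds iff for every $v \in V_a$ with $\proj_a(w) = v$ we have $\eta(\mathcal{M}), v \models_a \forallworld_a \translate{\Psi}$. Since $\proj_a$ is a partial function there is at most one such $v$, namely $v = [w]_a$ when $a$ is alive at $w$, and none otherwise. Unfolding $\forallworld_a$ at that vertex, the condition becomes: for every hyperedge $w'$ with $\proj_a(w') = [w]_a$, i.e.\ (by the combinatorial fact) every $w'$ with $w' \sim_a w$, we have $\eta(\mathcal{M}), w' \models_e \translate{\Psi}$. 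By the induction hypothesis this is equivalent to $\mathcal{M}, w' \models \Psi$ for every $w' \sim_a w$, which is precisely $\mathcal{M}, w \models K_a \Psi$ in the partial epistemic frame.

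I expect the modal case to be the only real obstacle, and within it the delicate point is making the dead-agent vacuity line up on both sides. When $a$ is dead at $w$, on the hypergraph side $\proj_a(w)$ is undefined, so $\forallagent_a \forallworld_a \translate{\Psi}$ is vacuously satisfied; on the frame side, since $\sim_a$ is symmetric and transitive, $w \not\sim_a w$ forbids any $w'$ with $w \sim_a w'$, so $K_a \Psi$ is vacuously true as well. One must check that the two successive universal unfoldings ($\forallagent_a$ then $\forallworld_a$) interact so that these two vacuity conditions coincide; everything else is bookkeeping, and the proof then closes by induction.
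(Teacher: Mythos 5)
Your proof is correct and follows essentially the same route as the paper's: structural induction with trivial atomic and Boolean cases, and for $K_a\Psi$ an unfolding of $\forallagent_a\forallworld_a$ through the definition of $\proj_a$ in $\eta(\mathcal{M})$ so that the quantification over hyperedges sharing the $a$-vertex $[w]_a$ matches the quantification over $w'\sim_a w$, with the dead-agent vacuity coinciding on both sides. You spell out the two-step unfolding and the vacuity check more explicitly than the paper does, but the argument is the same.
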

\begin{proof}
    We show the statement by induction on the structure of $\Phi$.
    For atomic propositions, as well as boolean connectives, the proof is trivial.
    For the modality, we have: $\mathcal{M}, w\models K_a\Phi$ if and only if for all $w'\in M$ such that $w\sim_a w'$, $\mathcal{M}, w'\models \Phi$.
    By induction, this is equivalent to for all $w'\in M$ such that $w\sim_a w'$, $\eta(\mathcal{M}), \eta(w')\models \translate{\Phi}$.
    By definition of $\eta$, it is the same as for all hyperedges $e\in \eta(\mathcal{M})_E$ that share an $a$ vertex with $\eta(w)$, $\eta(\mathcal{M}), e\models \translate{\Phi}$.
    This is equivalent to $\eta(\mathcal{M}), \eta(w)\models \translate{K_a\Phi}$.  
\end{proof}

\begin{corollary}\label{cor:translation-validity}
    $\Phi$ is valid in a partial epistemic frame $\mathcal{M}$ iff $\translate{\Phi}$ is valid in $\eta(\mathcal{M})$.
\end{corollary}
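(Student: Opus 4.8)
The plan is to unfold both notions of validity and reduce the corollary to a pointwise application of \Cref{prop:translation-validity}. Recall that $\Phi$ is valid in the frame $\mathcal{M}$ exactly when, for every valuation of the atoms of $\KBfour$ over the worlds of $\mathcal{M}$ and every world $w \in M$, the resulting pointed model satisfies $\mathcal{M}, w \models \Phi$; dually, $\translate{\Phi}$ is valid in $\eta(\mathcal{M})$ when, for every hypergraph valuation $(\{\ell_a\}_{a\in\cA}, \ell_e)$ and every hyperedge $e \in E$, we have $\eta(\mathcal{M}), e \models_e \translate{\Phi}$. The goal is to show that these two universally quantified statements coincide.

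First I would use the fact that by construction $E = M$, so $\eta$ acts as the identity on worlds/hyperedges and the quantification over $w \in M$ matches the quantification over $e = \eta(w) \in E$. Next I would match the valuations: since the world atoms $\AP_e$ are taken to be exactly the atoms of $\KBfour$ and $E = M$, a frame valuation is literally the same data as a hypergraph valuation $\ell_e : \AP_e \to P(E)$. The one point deserving care is that the agent valuations $\{\ell_a\}_{a \in \cA}$ play no role: since $\translate{\cdot}$ produces only world formulas whose atomic leaves lie in $\AP_e$ (indeed $\AP_a = \varnothing$ in this setting), the truth value of $\translate{\Phi}$ at a hyperedge is independent of the choice of $\ell_a$, so the quantification over full hypergraph valuations collapses to one over world-atom valuations alone.

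With both index sets identified, I would apply \Cref{prop:translation-validity} for each fixed valuation and world to obtain $\mathcal{M}, w \models \Phi \iff \eta(\mathcal{M}), \eta(w) \models_e \translate{\Phi}$, and then take the conjunction over all worlds and all valuations on each side. This yields the desired biconditional. The main (and only real) obstacle is making precise the irrelevance of the agent valuations; once that observation is isolated, the corollary is an immediate lift of \Cref{prop:translation-validity} through the two bijections.
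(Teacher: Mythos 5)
Your proposal is correct and matches the paper's (implicit) argument: the corollary is stated without a separate proof precisely because it is the universal quantification of \Cref{prop:translation-validity} over all worlds and valuations, using that $\eta$ is the identity on $M=E$ and that world-atom valuations on the frame and on the hypergraph are the same data. Your additional observation that the agent valuations $\ell_a$ are irrelevant (since $\translate{\Phi}$ contains no atoms from $\AP_a$, which are empty here) is a correct and worthwhile piece of bookkeeping, but it does not change the route.
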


Using this and \cref{thm:completeness}, we can show that $\twoCH$ is a conservative extension of $\KBfour$:

\begin{theorem}
	\label{thm:conservative-extension}
    For every $\KBfour$-formula $\Phi$, $\vdash_{\KBfour + \mathsf{NE}} \Phi$ if and only if $\vdash_e \translate{\Phi}$.
\end{theorem}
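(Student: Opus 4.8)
The plan is to prove the two directions of the biconditional separately, leveraging the semantic correspondence already established. The overall strategy is to reduce provability in each logic to validity on the appropriate class of models via the respective completeness theorems, and then transport validity across the isomorphism $\eta/\kappa$ using \Cref{cor:translation-validity}.

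For the forward direction, suppose $\vdash_{\KBfour + \mathsf{NE}} \Phi$. By the soundness of $\KBfour + \mathsf{NE}$ (established in \cite{GoubaultLR21kb4}), $\Phi$ is valid in every partial epistemic frame satisfying the $\mathsf{NE}$ condition, which is precisely the class of partial epistemic frames as defined here (every world has at least one agent alive). By \Cref{cor:translation-validity}, $\translate{\Phi}$ is then valid in $\eta(\mathcal{M})$ for every partial epistemic frame $\mathcal{M}$. Since \Cref{prop:kripke-hypergraph} tells us that $\eta$ is an isomorphism of categories, every chromatic hypergraph arises as $\eta(\mathcal{M})$ for some $\mathcal{M}$ (namely $\mathcal{M} = \kappa(H)$, up to isomorphism). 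Thus $\translate{\Phi}$ is valid on every chromatic hypergraph. By the completeness of $\twoCH$ (\cref{thm:completeness}), we conclude $\vdash_e \translate{\Phi}$.

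For the converse, suppose $\vdash_e \translate{\Phi}$. By the soundness of $\twoCH$ (\cref{prop:soundness}), $\translate{\Phi}$ is valid in every chromatic hypergraph, in particular in $\eta(\mathcal{M})$ for every partial epistemic frame $\mathcal{M}$. Applying \Cref{cor:translation-validity} in the reverse direction, $\Phi$ is valid in every partial epistemic frame $\mathcal{M}$. Finally, by the completeness of $\KBfour + \mathsf{NE}$ with respect to partial epistemic frames (again from \cite{GoubaultLR21kb4}), we obtain $\vdash_{\KBfour + \mathsf{NE}} \Phi$.

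I expect the main subtlety to lie not in the logical bookkeeping but in ensuring that the classes of models line up exactly: one must check that the $\mathsf{NE}$ axiom corresponds precisely to the non-emptiness condition built into the definition of partial epistemic frames (every world relates to itself under some $\sim_a$), and that $\translate{\cdot}$ lands in the correct sort. Since $\translate{K_a \Phi}$ is defined as $\forallagent_a \forallworld_a \translate{\Phi}$, which is exactly the \emph{unsafe knowledge} operator $\Kunsafe_a$ from \cref{sec:safe-knowledge}, \Cref{prop:translation-validity} already certifies that the semantic translation matches the intended reading of $K_a$ under the frame-to-hypergraph map. The only remaining care is to invoke soundness and completeness of $\KBfour + \mathsf{NE}$ as black boxes from \cite{GoubaultLR21kb4}; granting those, the argument is a clean two-way trip through \Cref{cor:translation-validity}.
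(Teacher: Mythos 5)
Your proposal is correct and follows essentially the same route as the paper's own proof: completeness of $\KBfour + \mathsf{NE}$, transfer of validity via \Cref{cor:translation-validity}, and completeness of $\twoCH$. The only difference is that you spell out the two directions separately and explicitly note that essential surjectivity of $\eta$ (from \Cref{prop:kripke-hypergraph}) is needed to pass from validity on all frames to validity on all chromatic hypergraphs --- a point the paper's one-line proof leaves implicit.
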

\begin{proof}
    By completeness, for $\KBfour$ we have $\vdash_{\KBfour + \mathsf{NE}} \Phi \Leftrightarrow\ \models \Phi$. 
    By \Cref{cor:translation-validity}, we have $\models \Phi \Leftrightarrow\ \models_e \translate{\Phi}$.
    And by completeness for $\twoCH$, we have $\models_e \translate{\Phi} \Leftrightarrow\ \vdash_e \translate{\Phi}$.
\end{proof}

As a corollary, we get that the combined modality $\forallagent_a\forallworld_a$ satisfies axioms $\mathsf{K}$, $\mathsf{B}$, and $\mathsf{4}$.

\begin{remark}
	Similarly, one can wonder which logic we would get if we translate formulas using instead the \emph{safe knowledge} operator $\Ksafe_a \Phi = \existsagent_a\forallworld_a\Phi$.
	First thing to note is that this does \emph{not} yield the three-valued logic $\Sfive^{\bowtie}$ of~\cite{Ditmarsch22Complete}\footnote{A translation of $\Sfive^{\bowtie}$ into $\twoCH$ is possible, but it is more involved. One must first translate the well-definedness judgment, then set $\translate{K_a \Phi} := \Ksafe_a (\text{well-defined}(\Phi) \Rightarrow \translate{\Phi})$.}.
	We can show that the safe knowledge modality satisfies axioms $\mathsf{K}$, $\mathsf{T}$, and $\mathsf{B}$, but not $\mathsf{4}$.
	However, this modality is not normal as the necessitation rule is not admissible:
	$\true$ is valid in every world, but it is not the case that $\existsagent_a \forallworld_a \true$ is valid in every world.
	Thus, we cannot derive $\existsagent_a \forallworld_a \true$ from $\true$.
\end{remark}

\subsection{Correspondence with neighborhood frames}
\label{sec:neighborhood}

Chromatic hypergraphs require that every hyperedge contains at most one vertex of each color.
So in every world, an agent can have either~$0$ or~$1$ point of view.
But what happens if we drop this condition and allow agents to have multiple points of view about a given world?
Technically, this can be achieved by replacing, in \cref{def:chr-hypergraph}, the partial function $\proj_a : E \to V_a$ by a relation $\proj_a \subseteq E \times V_a$, i.e., get rid of the \emph{functionality} requirement.

This leads to an intriguing connection with neighborhood frames \cite{pacuit2017neighborhood}.
In this subsection we will not prove formal results, but rather give an intuition of the connection between the two notions, applying a similar construction as in \cref{sec:equivalence-frames}.

Neighborhood frames generalize epistemic frames by allowing agents to have multiple points of view on the same world, which on the side of hypergraphs corresponds exactly to the situation when we allow hyperedges to contain multiple vertices of the same color:

\begin{definition}[\cite{pacuit2017neighborhood}]
    A neighborhood frame is a pair $\mathsf{M} = (S, \{N_a\}_{a\in \mathcal{A}})$,
    where $S$ is a set of states,
    and for every agent $a \in \cA$,
    $N_a$ 
    is a function that assigns to every state $s \in S$, a set $N_a(s) \subseteq 2^S$ called the $a$-neighborhoods of $s$.
\end{definition}

An example of a situation where neighborhood frames are required is as follows.
Suppose we have two processes, communicating through shared memory.
The memory has two cells, and each cell stores a bit of information: $0$ or $1$.
Processes are given access to memory cells arbitrarily, and both can be assigned the same cell.
They know which cell is assigned to them, and they know the value that is stored in this cell, that is, they read the value of the cell.
Therefore, a process can have two points of view on the same situation, depending on which cell it is given access to.
Assume for example that the shared memory stores values~$(0, 1)$.
Process $a$, when assigned the first cell, knows that the memory stores $0$, and when assigned the second cell, knows that the memory stores $1$.
So the set of possible states is $S = \{ (0,0), (0,1), (1,0), (1,1) \}$.
In state $(0,1)$, the two possible points of view of process~$a$ are described by neighborhoods: $N_a((0,1)) = \{ \{ (0,0), (0,1) \}, \{ (0,1), (1,1) \} \}$

We can also make sense of this example using 
generalized chromatic hypergraphs, where we allow hyperedges that contain multiple points of view of the same agent.
In that case, we can model our example as follows.
We have two agents, $a$ and $b$.
There are four hyperedges corresponding to four possible states of the memory: $(0, 0)$, $(0, 1)$, $(1, 0)$, and $(1, 1)$.
Each agent has four possible points of view, depending on which memory cell is assigned (left or right), and which bit is read ($0$ or $1$).
Let us denote the vertices of agent~$a$ by $\{(a,m,b) \mid m \in \{L,R\}, b \in \{0,1\} \}$, and similarly for agent~$b$.
Then, a vertex $(a,m,b)$ belongs to a hyperedge $(x_L, x_R)$ if and only if $x_m = b$.


Recall the duality construction of \cref{fig:frames-hypergraphs}, switching the role of vertices and hyperedges.
In our case, the set of hyperedges becomes the set of states, and the set of vertices defines the neighborhood function.
In our example, the hyperedge/state $s = (0,a)$ contains two vertices of agent~$a$: $(a,L,0)$ and $(a,R,1)$.
The first vertex corresponds to the $a$-neighborhood $\{(0,0), (0,1)\}$, which is the set of hyperedges containing this vertex.
Similarly, the second vertex corresponds to the neighborhood $\{(0,1), (1,1)\}$.
This recovers the set $N_a(s)$ of the corresponding neighborhood frame.
Note that we do not obtain all neighborhood frames in this way, but only those in which a world belongs to all of its neighborhoods.

\section{Conclusion}

In this paper, we proposed a many-sorted modal logic for reasoning about knowledge in multi-agent systems that treat as first-class citizens both 
participating agents and the environment.
This allowed us to reconcile the numerous logics and models of the literature, which indeed struggled with expressing coherent general global properties of worlds and local properties of agents. 
There are two main extensions that we are currently studying based on this work.
First, having points of view of agents as first-class citizens, a combination of epistemic logics with temporal modalities allows us to provide a framework with greater emphasis on local action of agents, compared to e.g.,\ DEL \cite{hvdetal.DEL:2007} or interpreted systems \cite{fagin}. 
Secondly, a natural question arises as to whether we can reconcile chromatic hypergraphs, 
with chromatic (semi-)simplicial sets as studied in e.g.,~\cite{GoubaultKLR23semisimplicial}. This would allow us to naturally extend our logic with a distributed knowledge operator.

%


\providecommand{\noopsort}[1]{}

\appendix

\section{Proofs}

\subsection{Proof of \Cref{prop:K-universal}}

\begin{proof}
    First, we show that universal modalities distribute over conjunction, that is $\heartsuit(\xi \land \eta) \leftrightarrow (\heartsuit\xi \land \heartsuit\eta)$.
    Left-to-right direction: we have that $(\xi \land \eta) \to \xi$ and $(\xi \land \eta) \to \eta$. 
    Applying the RM rule, we get that $\heartsuit(\xi \land \eta) \to \heartsuit\xi$ and $\heartsuit(\xi \land \eta) \to \heartsuit\eta$.
    From this, the left-to-right direction follows.
    Right-to-left direction: denote the modality adjoint to $\heartsuit$ by $\spadesuit$, that is, if $\heartsuit=\forallagent_a$ then $\spadesuit=\existsworld_a$ and if $\heartsuit=\forallworld_a$ then $\spadesuit=\existsagent_a$.
    We have that $\spadesuit\heartsuit\xi \to \xi$ from $\heartsuit\xi \to \heartsuit\xi$ and the corresponding adjunction rule, similarly for $\eta$.
    From this we have that $\spadesuit\heartsuit\xi \land \spadesuit\heartsuit\eta \to \xi \land \eta$.
    By the same proof as in the left-to-right direction, we have that $\spadesuit(\heartsuit\xi \land \heartsuit\eta) \to \spadesuit\heartsuit\xi \land \spadesuit\heartsuit\eta$.
    Thus, we have that $\spadesuit(\heartsuit\xi \land \heartsuit\eta) \to \xi \land \eta$.
    Applying the adjunction rule, we get that $\heartsuit(\xi \land \eta) \to \heartsuit\xi \land \heartsuit\eta$, which is the right-to-left direction.

    The fact that $\mathsf{K}$ follows from the distribution of universal modalities over conjunction is a standard proof:
    From $((\xi\to \eta) \land \xi) \to \eta$ by RM we have $\heartsuit((\xi\to \eta) \land \xi) \to \heartsuit\eta$.
    By distribution, we have $(\heartsuit(\xi\to \eta) \land \heartsuit\xi) \to \heartsuit((\xi\to\eta)\land\xi)$. 
    Combining these two, we get $(\heartsuit(\xi\to \eta) \land \heartsuit\xi) \to \heartsuit\eta$, and thus $\heartsuit(\xi\to \eta) \to (\heartsuit\xi \to \heartsuit\eta)$.
\end{proof}

\subsection{Proof of \Cref{prop:useful-formulas}}

\begin{proof}
    Recall the list of formulas:

    \begin{multicols}{2}
        \begin{enumerate}
            \item $\existsagent_a\varphi \to \forallagent_a\varphi$;
            \item $\forallworld_a\Phi \to \forallworld_a\existsagent_a\forallworld_a\Phi$;
            \item $\existsagent_a\forallworld_a\Phi\to \Phi$;
            \columnbreak
            \item $\Phi \to \forallagent_a\existsworld_a \Phi$;
            \item $\varphi \to \forallworld_a\existsagent_a \varphi$;
            \item $\forallworld_a\Phi\to \existsworld_a\Phi$.
        \end{enumerate}
    \end{multicols}

    For the first formula, we just apply the adjunction rule to $\existsworld_a\existsagent_a\varphi \to\varphi$, which is an axiom. 
    In order to show the second formula, just apply the adjunction rule to $\existsagent_a\forallworld_a\Phi \to \existsagent_a\forallworld_a\Phi$, which is a tautology.
    Formulas 3, 4 and 5 are derived from $\forallworld_a\Phi \to \forallworld_a\Phi$, $\existsworld_a\Phi \to \existsworld_a\Phi$, and $\existsagent_a\varphi \to \existsagent_a\varphi$ respectively by applying the adjunction rule.
    The last formula is shown as follows: from formulas 3 and 4 we have $\existsagent_a\forallworld_a\Phi \to \forallagent_a\existsworld_a\Phi$. 
    Applying the adjunction rule, we get $\existsworld_a\existsagent_a\forallworld_a\Phi\to \existsworld_a\Phi$. 
    We also have $\forallworld_a\Phi \to \existsworld_a\existsagent_a\forallworld_a\Phi$, which is an axiom.
    From the last two formulas, we get $\forallworld_a\Phi \to \existsworld_a\Phi$.
\end{proof}

\subsection{Proof of \Cref{prop:kripke-hypergraph}}
\begin{proof}
    We just need to show how $\eta$ and $\kappa$ are extended to morphisms. 
    Functoriality is then straightforward, and checking that $\eta(\kappa(H)) \simeq H$ and $\kappa(\eta(\mathcal{M})) \simeq \mathcal{M}$ is also straightforward.
    Let $f: H\to H'$ be a morphism of chromatic hypergraphs.
    Then $\eta(f): \eta(H) \to \eta(H')$ just sends a world $e$ to $f_E(e)$.
    This is indeed a morphism of partial epistemic frames: suppose two worlds $e$ and $e'$ in $\eta(H)$ are $\sim_a$-equivalent.
    It means that in $H$ these two hyperedges share a vertex, and thus in $H'$ the two hyperedges $f_E(e)$ and $f_E(e')$ share a vertex, and thus $f_E(e) \sim_a f_E(e')$.

    Now let $g: \mathcal{M} \to \mathcal{M}'$ be a morphism of partial epistemic frames.
    Then $\kappa(g): \kappa(\mathcal{M}) \to \kappa(\mathcal{M}')$ sends a hyperedge $w$ to $g(w)$, thus $\kappa(g)_E$ is defined. 
    We need to show that it induces a map on vertices, and that the condition for morphsisms is satisfied.
    As $g$ preserves $\sim_a$, it induces a map on equivalence classes, which is exactly $\kappa(g)_a$.
    Let $w$ be a hyperedge and $v$ be a vertex in $\kappa(\mathcal{M})$, such that $\proj_a(w) = v$.
    It means that $w$ belongs to the equivalence class corresponding to $v$ in $\mathcal{M}$.
    Thus, g(w) belongs to the equivalence class corresponding to $g(v)$ in $\mathcal{M}'$, and thus $\proj_a(g(w)) = g(v)$.
\end{proof}



\end{document}